\newtheorem{claim}[theorem]{Claim}
\newtheorem{fact}[theorem]{Fact}
\newcommand{\s}{{\lvert \Sigma \rvert}}
\newcommand{\LCS}{\mathit{\textup{LCS}}}
\newcommand{\WLCS}{\mathit{\textup{WLCS}}}
\newcommand{\rtlm}{\mathit{\textup{RTLM}}}
\newcommand{\pat}{\mathit{\textup{pat}}}
\newcommand{\sk}{\mathit{\textup{sk}}}
\newcommand{\Oh}{\mathit{\mathcal{O}}}
\newcommand{\tOh}{\mathit{\mathcal{\widetilde O}}}
\title{Sketching, Streaming, and Fine-Grained Complexity of (Weighted) LCS}
\titlerunning{Sketching, Streaming, and Fine-Grained Complexity of (Weighted) LCS}%optional, please use if title is longer than one line
\author{Karl Bringmann}{Max Planck Institute for Informatics, Saarland Informatics Campus,\\ Saarbr\"ucken, Germany }{kbringma@mpi-inf.mpg.de}{}{}%mandatory, please use full name; only 1 author per \author macro; first two parameters are mandatory, other parameters can be empty.
\author{Bhaskar Ray Chaudhury}{Max Planck Institute for Informatics, Saarland Informatics Campus,\\ Graduate School of Computer Science, Saarbr\"ucken, Germany}{braycha@mpi-inf.mpg.de}{}{}
\authorrunning{K. Bringmann and B. R. Chaudhury}%mandatory. First: Use abbreviated first/middle names. Second (only in severe cases): Use first author plus 'et al.'
\subjclass{F.2.2 Nonnumerical Algorithms and Problems}% mandatory: Please choose ACM 2012 classifications from https://www.acm.org/publications/class-2012 or https://dl.acm.org/ccs/ccs_flat.cfm . E.g., cite as "General and reference $\rightarrow$ General literature" or \ccsdesc[100]{General and reference~General literature}. 
\keywords{algorithms, SETH, communication complexity, run-length encoding}%mandatory
\begin{document}

\maketitle

\begin{abstract}
  We study sketching and streaming algorithms for the Longest Common Subsequence problem (LCS) on strings of small alphabet size $|\Sigma|$. For the problem of deciding whether the LCS of strings $x,y$ has length at least $L$, we obtain  a sketch size and streaming space usage of $\Oh(L^{|\Sigma| - 1} \log L)$.
  We also prove matching unconditional lower bounds.
  
  As an application, we study a variant of LCS where each alphabet symbol is equipped with a weight that is given as input, and the task is to compute a common subsequence of maximum total weight. Using our sketching algorithm, we obtain an $\Oh(\min\{nm, n + m^\s\})$-time algorithm for this problem, on strings $x,y$ of length $n,m$, with $n \ge m$. We prove optimality of this running time up to lower order factors, assuming the Strong Exponential Time Hypothesis.
\end{abstract}

\section{Introduction}

\subsection{Sketching and Streaming LCS}

In the Longest Common Subsequence problem (LCS) we are given strings $x$ and $y$ and the task is to compute a longest string $z$ that is a subsequence of both $x$ and $y$.
This problem has been studied extensively, since it has numerous applications in bioinformatics (e.g.\ comparison of DNA sequences~\cite{AltschulGMML90}), natural language processing (e.g.\ spelling correction~\cite{Morgan70,WagnerF74}), file comparison (e.g.\ the UNIX \texttt{diff} utility~\cite{HuntMcI75,MillerM85}), etc. 
Motivated by big data applications, in the first part of this paper we consider space-restricted settings as follows:
\begin{itemize}
  \item \emph{LCS Sketching:} Alice is given $x$ and Bob is given $y$. Both also are given a number $L$. Alice and Bob compute sketches $\sk_L(x)$ and $\sk_L(y)$ and send them to a third person, the referee, who decides whether the LCS of $x$ and $y$ is at least $L$. The task is to minimize the size of the sketch (i.e., its number of bits) as well as the running time of Alice and Bob (encoding) and of the referee (decoding).
  
  \item \emph{LCS Streaming:} We are given $L$, and we scan the string $x$ from left to right once, and then the string $y$ from left to right once. After that, we need to decide whether the LCS of $x$ and $y$ is at least $L$. We want to minimize the space usage as well as running time. 
\end{itemize}
Analogous problem settings for the related edit distance have found surprisingly good solutions after a long line of work~\cite{belazzougui2016edit,jowhari2012efficient,saks2013space,chakraborty2016streaming}.
For LCS, however, strong unconditional lower bounds are known for sketching and streaming:
Even for $L = 4$ the sketch size and streaming memory must be $\Omega(n)$ bits, since the randomized communication complexity of this problem is $\Omega(n)$~\cite{sun2007communication}. 
Similarly strong results hold even for \emph{approximating} the LCS length~\cite{sun2007communication}, see also~\cite{liben2006finding}.
However, these impossibility results construct strings over alphabet size $\Theta(n)$. 

In contrast, in this paper we focus on strings $x,y$ defined over a fixed alphabet $\Sigma$ (of constant size). This is well motivated, e.g., for binary files ($\Sigma = \{0,1\}$), DNA sequences ($\Sigma = \{A,G,C,T\}$), or English text ($\Sigma = \{a,\ldots,z,A,\ldots,Z\}$ plus punctuation marks). We therefore \emph{suppress factors depending only on $|\Sigma|$} in $\Oh$-notation throughout the whole paper.
Surprisingly, this setting was ignored in the sketching and streaming literature so far; the only known upper bounds also work in the case of large alphabet and are thus $\Omega(n)$.

Before stating our first main result we define a run in a string as the non extendable repetition of a character. For example the string $baaabc$ has a run of character $a$ of length $3$. Our first main result is the following deterministic sketch.

\begin{theorem}
 \label{thm:sketchmain}
 Given a string $x$ of length $n$ over alphabet $\Sigma$ and an integer $L$, we can compute a subsequence $C_L(x)$ of $x$ such that (1) $|C_L(x)| = \Oh(L^\s)$, (2) $C_L(x)$ consists of $\Oh(L^{\s-1})$ runs of length at most $L$, and (3) any string $y$ of length at most $L$ is a subsequence of $x$ if and only if it is a subsequence of $C_L(x)$. Moreover, $C_L(x)$ is computed by a one-pass streaming algorithm with memory $\Oh(L^{\s-1} \log L)$ and running time $\Oh(1)$ per symbol of $x$.
\end{theorem}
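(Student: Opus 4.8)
The plan is to define $C_L(x)$ by a local rule on the positions of $x$, which makes the streaming maintenance and property~(3) almost immediate, and then to spend the real effort on the two size estimates. For a position $i$ of $x$, let $a(i)$ be the minimum length of a string whose greedy left-to-right embedding into $x$ ends exactly at $i$ (greedy embedding: match symbols one by one, always to the leftmost not-yet-used occurrence). Then $a(i)=1$ if $i$ is the first occurrence of $x[i]$, and otherwise $a(i)=1+\min\{a(j):q\le j\le i-1\}$ where $q$ is the previous occurrence of $x[i]$; in particular $a(i)$ depends only on $x[1..i]$. I take $C_L(x)$ to be the subsequence of $x$ consisting of exactly the positions $i$ with $a(i)\le L$. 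Since keeping or discarding position $i$ is decided by $x[1..i]$, this is produced by a one-pass streaming algorithm; to evaluate the rule in $\Oh(1)$ time per symbol, maintain alongside the output, for every $c\in\Sigma$, the running minimum $m_c=\min\{a(j):q_c\le j\le i\}$ with $q_c$ the most recent occurrence of $c$ (sentinel $0$ before $c$ appears), so that on reading $x[i+1]=c$ one has $a(i+1)=1+m_c$, emits the symbol iff this is $\le L$, and updates $m_c\leftarrow a(i+1)$ and $m_d\leftarrow\min(m_d,a(i+1))$ for $d\ne c$; all values may be capped at $L+1$ without effect, so the non-output state is $\Oh(\s\log L)$ bits. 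For property~(3): if $y$ is a subsequence of $x$ with $|y|\le L$ its greedy embedding lands on positions $p_1<\dots<p_{|y|}$ with $a(p_k)\le k\le L$, so every $p_k$ is kept and $y$ is a subsequence of $C_L(x)$; the converse is trivial as $C_L(x)$ is a subsequence of $x$.

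Next I would handle the run structure. Inside a maximal run of $x$ of symbol $c$ at positions $p,p+1,\dots$, the previous occurrence of $c$ is the immediately preceding position, so $a(p+k)=a(p)+k$; hence the kept positions of one run of $x$ form an initial segment of length at most $L$, i.e.\ a single run of $C_L(x)$ of length $\le L$. Consequently the number of runs of $C_L(x)$ equals the number of runs of $x$ whose first position has $a$-value $\le L$, and $|C_L(x)|$ is at most $L$ times that count, so claims (1), (2), and the claimed streaming memory all reduce to bounding this count by $\Oh(L^{\s-1})$.

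The heart of the proof --- and the step I expect to be the main obstacle --- is exactly this count. I would work with the capped state $\vec m=(m_c)_{c\in\Sigma}$: it is a deterministic function of the prefix read, on reading $c$ the coordinate $m_c$ weakly increases while all others weakly decrease, and a run of $C_L(x)$ of symbol $c$ is created precisely at a run boundary of $x$ into $c$ at which $m_c\le L-1$. Two structural facts are the levers: (a)~$\mu:=\min_c m_c$ never decreases in time, since reading a symbol that uniquely attains the current minimum strictly increases $\mu$ and any other symbol leaves it unchanged, and a run of $C_L(x)$ can be created only while $\mu\le L-1$; and (b)~once a run of $x$ with active symbol $c$ begins, the other $\s-1$ coordinates of $\vec m$ freeze until the run ends, so whether the next run boundary creates a run of $C_L(x)$ (and with which start value) is a function only of $c$ and those $\s-1$ frozen coordinates, data living in a set of size $\Oh(L^{\s-1})$. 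The work lies in combining (a) and (b) into an injective charging of surviving run boundaries to such reduced configurations --- intuitively, a repeated configuration would let a block of $x$ be iterated, carrying the full state back to an earlier value while $\mu$ stays fixed, which by~(a) forces every symbol of that block to be saturated and hence unable to create any run of $C_L(x)$, a contradiction. The delicate point is that the naive state has $\s$ coordinates of magnitude $\Oh(L)$ and only yields $|C_L(x)|=\Oh(L^\s)$; squeezing out the extra factor of $L$ for the run count is what forces the ``one frozen degree of freedom per run of $x$'' observation plus the monotonicity of $\mu$, and this is also where one must be careful to match the claimed lower bound.
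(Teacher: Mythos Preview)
Your construction of $C_L(x)$ via the greedy-embedding depth $a(i)$ is genuinely different from the paper's and quite elegant: the paper instead drops a symbol $c$ whenever the current output already has a suffix that is an $(L,S)$-permutation string (each of $L$ blocks contains every symbol of $S$) for some $S\ni c$, and maintains counters $t_S$ for all $S\subseteq\Sigma$. Your maintenance of just $|\Sigma|$ running minima $m_c$ is lighter, and your proofs of property~(3) and of the per-run length bound $\le L$ are correct and cleaner than the paper's. (One small correction: observation~(b) is slightly off --- the non-$c$ coordinates can drop on the \emph{first} step of the run via $\min(m_d,a(p))$ and only freeze thereafter.)

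The genuine gap is exactly where you locate it: the $\Oh(L^{|\Sigma|-1})$ run count. Your charging sketch is not yet a proof. The assertion that surviving run boundaries inject into ``reduced configurations'' $(c,(m_d)_{d\ne c})$ needs a real argument; your justification (``a repeated configuration would let a block of $x$ be iterated, carrying the full state back to an earlier value while $\mu$ stays fixed, forcing saturation'') does not go through as stated, since at two boundaries with the same reduced configuration the missing coordinate $m_c$ may differ, so the \emph{full} state need not recur, and $\mu$ may well have strictly increased in between. There may be a correct injectivity argument here (the monotonicity of $\mu$ together with the fact that every $a$-value produced after time $t$ is $\ge 1+\mu^{(t)}$ constrains how coordinates can return to earlier values), but you have not given it, and it is not routine. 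By contrast, the paper gets the bound by a short induction on $|S|$: any substring of $C_L(x)$ over alphabet $S$ cannot contain an $(L,S)$-permutation prefix followed by another $S$-symbol, so it splits at at most $L$ ``complete-$S$'' markers, between any two of which only $|S|-1$ symbols occur; this gives $r_L(|S|)\le L\cdot(r_L(|S|-1)+1)=\Oh(L^{|S|-1})$. If you want to salvage your route, one option is to show your $C_L(x)$ satisfies the same ``no $(L,S)$-permutation suffix before a symbol of $S$'' invariant and then invoke that induction; otherwise you need to actually carry out the charging.
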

Note that we can store $C_L(x)$ using $\Oh(L^{\s-1} \log L)$ bits, since each run can be encoded using $\Oh(\log L)$ bits. 
This directly yields a solution for \emph{LCS sketching}, where Alice and Bob compute the sketches $\sk_L(x) = C_L(x)$ and $\sk_L(y) = C_L(y)$ and the referee computes an LCS of $C_L(x)$ and $C_L(y)$. If this has length at least $L$ then also $x,y$ have LCS length at least $L$. Similarly, if $x,y$ have an LCS $z$ of length at least $L$, then $z$ is also a subsequence of $C_L(x)$ and $C_L(y)$, and thus their LCS length is at least $L$, showing correctness. The sketch size is $\Oh(L^{\s-1} \log L)$ bits, the encoding time is $\Oh(n)$, and the decoding time is $\Oh(L^{2\s})$, as LCS can be computed in quadratic time in the string length $\Oh(L^\s)$.

We similarly obtain an algorithm for \emph{LCS streaming} by computing $C_L(x)$ and then $C_L(y)$ and finally computing an LCS of $C_L(x)$ and $C_L(y)$.  The space usage of this streaming algorithm is $\Oh(L^{\s-1} \log L)$, and the running time is $\Oh(1)$ per symbol of $x$ and $y$, plus $\Oh(L^{2\s})$ for the last step.

These size, space, and time bounds are surprisingly good for $|\Sigma| = 2$, but quickly deteriorate with larger alphabet size. For very large alphabet size, this deterioration was to be expected due to the $\Omega(n)$ lower bound for $|\Sigma| = \Theta(n)$ from~\cite{sun2007communication}. 
We further show that this deterioration is necessary by proving optimality of our sketch in several senses:
\begin{itemize}
  \item We show that for any $L,\Sigma$ there exists a string $x$ (of length $\Oh(L^{\lvert \Sigma \rvert})$) such that no string~$x'$ of length $o(L^\s)$ has the same set of subsequences of length at most $L$. Similarly, this string $x$ cannot be replaced by any string consisting of $o(L^{\s-1})$ runs without affecting the set of subsequences of length at most $L$. 
This shows optimality of Theorem~\ref{thm:sketchmain} among sketches that replace $x$ by another string $x'$ (not necessarily a subsequence of $x$) and then compute an LCS of $x'$ and $y$. See Theorem~\ref{Combinatorial optimality of encoding}.

  \item More generally, we study the \emph{Subsequence Sketching} problem: Alice is given a string $x$ and number $L$ and computes $\sk_L(x)$. Bob is then given $\sk_L(x)$ and a string $y$ of length $L$ and decides whether $y$ is a subsequence of $x$. Observe that any solution for LCS sketching or streaming with size/memory $S = S(L,\Sigma)$ yields a solution for subsequence sketching with sketch size $S$.\footnote{For LCS sketching this argument only uses that we can check whether $y$ is a subsequence of $x$ by testing whether the LCS length of $x$ and $y$ is $|y|$. For LCS streaming we use the memory state right after reading $x$ as the sketch $\sk_L(x)$ and then use the same argument.} Hence, any lower bound for subsequence sketching yields a lower bound for LCS sketching and streaming. We show that any deterministic subsequence sketch has size $\Omega(L^{\s-1} \log L)$ in the worst case over all strings $x$. This matches the run-length encoding of $C_L(x)$ even up to the $\log L$-factor. If we restrict to strings of length $\Theta(L^{\s-1})$, we still obtain a sketch size lower bound of $\Omega(L^{\s-1})$. See Theorem~\ref{One sided deterministic communication}.
  
  \item Finally, randomization does not help either: We show that any randomized subsequence sketch, where Bob may err in deciding whether $y$ is a subsequence of $x$ with small constant probability, has size $\Omega(L^{\s-1})$, even restricted to strings $x$ of length $\Theta(L^{\s-1})$. See Theorem~\ref{One sided randomized communication}.
\end{itemize}
We remark that Theorem~\ref{thm:sketchmain} only makes sense if $L \ll n$. 
Although this is not the best motivated regime of LCS in practice, it corresponds to testing whether $x$ and $y$ are ``very different'' or ``not very different''. This setting naturally occurs, e.g., if one string is much longer than the other, since then $L \le m \ll n$. We therefore think that studying this regime is justified for the fundamental problem LCS.

\subsection{WLCS: In between min-quadratic and rectangular time}

As an application of our sketch, we determine the (classic, offline) time complexity of a weighted variant of LCS, which we discuss in the following.

A textbook dynamic programming algorithm computes the LCS of given strings $x,y$ of length $n$ in time $\Oh(n^2)$. %, and in the worst case only an improvement by logarithmic factors is known~\cite{MasekP80}. 
A major result in fine-grained complexity shows that further improvements by polynomial factors would refute the Strong Exponential Time Hypothesis (SETH)~\cite{AbboudBVW15,BringmannK15} (see Section~\ref{sec:conlowerbound} for a definition). 
In case $x$ and $y$ have different lengths $n$ and $m$, with $n \ge m$, Hirschberg's algorithm computes their LCS in time $\Oh((n + m^2) \log n)$~\cite{Hirschberg77}, and this is again near-optimal under SETH.
This running time could be described as ``min-quadratic'', as it is quadratic in the minimum of the two string lengths.
In contrast, many other dynamic programming type problems have ``rectangular'' running time\footnote{By $\tOh$-notation we ignore factors of the form $\textup{polylog}(n)$.} $\tOh(nm)$, with a matching lower bound of $(nm)^{1-o(1)}$ under SETH, e.g., Fr\'echet distance~\cite{AltG95,Bringmann14}, dynamic time warping~\cite{AbboudBVW15,BringmannK15}, and regular expression pattern matching~\cite{myers1992four,backurs2016regular}. 

Part of this paper is motivated by the intriguing question whether there are problems with \emph{intermediate} running time, between ``min-quadratic'' and ``rectangular''. Natural candidates are generalizations of LCS, such as the weighted variant \emph{WLCS} as defined in~\cite{AbboudBVW15}: Here we have an additional \emph{weight function} $W \colon \Sigma \to \mathbb{N}$, and the task is to compute a common subsequence of $x$ and $y$ with maximum total weight. 
This problem is a natural variant of LCS that, e.g., came up in a SETH-hardness proof of LCS~\cite{AbboudBVW15}. It is not to be confused with other weighted variants of LCS that have been studied in the literature, such as a statistical distance measure where given the probability of every symbol's occurrence at every text location the task is to find a long and likely subsequence~\cite{amir2010weighted,cygan2016polynomial}, a variant of LCS that favors consecutive matches~\cite{lin2004rouge}, or edit distance with given operation costs~\cite{BringmannK15}.

Clearly, WLCS inherits the hardness of LCS and thus requires time $(n + m^2)^{1-o(1)}$. However, the matching upper bound $\tOh(n + m^2)$ given by Hirschberg's algorithm only works as long as the function $W$ is fixed (then the hidden constant depends on the largest weight). Here, we focus on the variant where the weight function $W$ is part of the input. In this case, the basic $\Oh(nm)$-time dynamic programming algorithm is the best known. 

Our second main result is to settle the time complexity of WLCS in terms of $n$ and $m$ for any fixed constant alphabet $\Sigma$, up to lower order factors $n^{o(1)}$ and assuming SETH.
\begin{theorem} \label{thm:main2}
  WLCS can be solved in time $\Oh(\min\{nm, n + m^{|\Sigma|}\})$. Assuming SETH, WLCS requires time $\min\{nm, n + m^{|\Sigma|}\}^{1-o(1)}$, even restricted to $n  =\Theta(m^\alpha)$ and $|\Sigma| = \sigma$ for any constants $\alpha \in \mathbb{R}, \alpha \ge 1$ and $\sigma \in \mathbb{N}, \sigma \ge 2$.
\end{theorem}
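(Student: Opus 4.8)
The plan is to prove the upper bound and the conditional lower bound separately. For the \emph{upper bound}, the $\Oh(nm)$ term is just the textbook dynamic program, so the work is the $\Oh(n + m^\s)$ branch, which we use when $m^\s \le nm$, i.e.\ $m^{\s-1} \le n$. The idea is exactly to apply the sketch of Theorem~\ref{thm:sketchmain} with threshold $L := m$: in one left-to-right pass over $x$ we compute $C_m(x)$, a subsequence of $x$ of length $\Oh(m^\s)$ that has the same subsequences of length $\le m$ as $x$. Since $y$ has length $m$, every common subsequence of $x$ and $y$ has length $\le m$ and is therefore a common subsequence of $C_m(x)$ and $y$, and vice versa; moreover the weight function is irrelevant to this equivalence, as it only depends on which strings are common subsequences. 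Hence WLCS of $x,y$ equals WLCS of $C_m(x), y$, and we may run the $\Oh(nm)$ dynamic program on the shorter pair, costing $\Oh(|C_m(x)|\cdot m) = \Oh(m^{\s+1})$ plus the $\Oh(n)$ streaming cost. To shave the extra factor of $m$ down to the claimed $\Oh(n + m^\s)$ one should instead exploit part~(2) of Theorem~\ref{thm:sketchmain}: $C_m(x)$ consists of only $\Oh(m^{\s-1})$ runs, so its run-length encoding has size $\Oh(m^{\s-1})$, and WLCS on run-length encoded strings of $r$ and $r'$ runs can be solved in time $\tOh(r r' + \text{(output-ish terms)})$ — here one string, $y$, has $\le m$ runs, giving $\tOh(m^{\s-1} \cdot m) = \tOh(m^\s)$; I would either cite a run-length-encoded LCS/WLCS algorithm or sketch the standard block-DP that achieves this. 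This yields total time $\Oh(n) + \tOh(m^\s)$, which is $\Oh(\min\{nm, n+m^\s\})$ up to the suppressed lower-order factors.

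For the \emph{lower bound}, the target is $\min\{nm, n+m^\s\}^{1-o(1)}$ under SETH, for every fixed $\sigma \ge 2$ and every constant exponent $\alpha \ge 1$ relating $n = \Theta(m^\alpha)$. When $\alpha$ is small (roughly $\alpha \le \sigma-1$) the bound $\min\{nm,n+m^\s\}$ is $nm = \Theta(m^{\alpha+1})$, and here I would reduce from the known SETH-hardness of plain LCS: the quadratic LCS lower bound of~\cite{AbboudBVW15,BringmannK15} already produces, from an Orthogonal Vectors instance, two strings whose LCS length encodes the answer, over a constant alphabet; WLCS with unit weights is LCS, so padding/balancing the two strings to lengths $\Theta(m^\alpha)$ and $\Theta(m)$ (e.g.\ by appending a block of a fresh symbol to the longer string, which changes neither string's relevant subsequence structure nor the optimum in a controlled way) gives an $(nm)^{1-o(1)} = m^{(\alpha+1)(1-o(1))}$ lower bound. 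The genuinely new regime is large $\alpha$ (roughly $\alpha \ge \sigma-1$), where $\min\{nm, n+m^\s\} = n + m^\s$ and, since $n$ dominates trivially as input size, the real content is an $m^{\sigma(1-o(1))}$ lower bound in terms of the \emph{shorter} string length $m$ alone.

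The main obstacle is exactly this second regime: we must construct, from a SETH-hard instance (Orthogonal Vectors on $N$ vectors), strings $x,y$ over an alphabet of size $\sigma$ with $|y| = m = N^{1/\sigma + o(1)}$ (so that $m^\sigma = N^{1+o(1)}$) together with a weight function $W$, such that the maximum-weight common subsequence decides the OV instance. The natural strategy, mirroring the optimality constructions behind Theorems~\ref{Combinatorial optimality of encoding}--\ref{One sided deterministic communication}: use the weights to give each alphabet symbol a place value, encode each of the $N$ vectors as a short gadget string so that a vector contributes to the common subsequence only when it is "selected", and arrange the gadgets so that the $\binom{N}{2}$ vector pairs correspond to length-$\le m$ subsequences — the super-linear-in-$N$ LCS-style lower bounds are obtained precisely because a single short string of length $\Theta(m^{\sigma-1})$ can witness $\Theta(m)$-long subsequences in $m^{\Theta(\sigma)}$ essentially different ways (this is the combinatorial heart already isolated in the sketch-optimality theorems). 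One then pads to enforce $n = \Theta(m^\alpha)$ with a neutral block. I would organize the write-up as: (i) recall the OV hypothesis and its SETH-hardness; (ii) for $\alpha \le \sigma-1$, the padding reduction from quadratic LCS; (iii) for $\alpha > \sigma-1$, the weighted gadget construction, proving correctness (optimal WLCS $\ge$ threshold $\iff$ OV has a solution) and verifying the length bounds $|x| = \Theta(m^\alpha)$, $|y| = \Theta(m)$; (iv) conclude that an $O(m^{\sigma-\delta})$-time WLCS algorithm would give an $O(N^{1-\delta/\sigma})$-time OV algorithm, contradicting SETH. The delicate points are keeping the alphabet size exactly $\sigma$ while packing $\log N$-bit vectors, and ensuring the weights (which are part of the input, hence allowed to be large) make the "wrong" common subsequences strictly suboptimal.
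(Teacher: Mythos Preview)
Your upper bound outline is essentially the paper's: compute $C_m(x)$, then solve WLCS against $y$ exploiting the $\Oh(m^{\s-1})$ runs. One caveat: you suggest citing a run-length-encoded LCS algorithm, but the paper explicitly observes that the known RLE--LCS algorithms (in particular Liu et al.'s $\Oh(rm)$ algorithm) do \emph{not} extend to WLCS, because in the weighted setting $D(i,j)$ and $D(i,j-1)$ can differ by terms that are not locally computable. The paper therefore designs a new $\Oh(rm)$ algorithm for WLCS with one RLE string, based on a recurrence in $D(i-1,\cdot)$ and a sliding-window maximum. You should not assume this step is off-the-shelf.

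Your lower bound has a genuine gap in the ``small $\alpha$'' regime. You propose to set all weights to $1$ (so WLCS $=$ LCS), invoke the quadratic LCS hardness, and pad one string up to length $\Theta(m^\alpha)$. This cannot work: for unbalanced LCS, Hirschberg's algorithm runs in time $\tOh(n+m^2)$, so there is \emph{no} $(nm)^{1-o(1)}$ lower bound for LCS when $\alpha>1$ --- padding only increases $n$, not the hardness, which stays at $m^{2-o(1)}$. The whole point of the theorem is that WLCS is strictly harder than LCS in the unbalanced regime, and this extra hardness must come from the weights.

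The paper's construction uses nontrivial weights in \emph{all} regimes. It takes unbalanced OV with $|A|=N$, $|B|=M$, partitions $A$ into $N/M$ groups of size $M$, and for each group builds the binary LCS gadget $x_A^{(i)}$ of length $\Theta(DM)$ against a single $y_B$. These $N/M$ gadgets are arranged hierarchically in $x$, separated by symbols $2,3,\ldots$, while $y$ surrounds $y_B$ with blocks $k^{M-1}$. The weights $W(k)=\lambda M^{k-1}$ grow geometrically, which forces any optimal WLCS to match all $M-1$ occurrences of each separator symbol $k$; this in turn pins down exactly one group $j$, so that $\WLCS(x,y)$ equals a fixed offset plus $\max_j \LCS(x_A^{(j)},y_B)$. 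Thus one WLCS computation answers $N/M$ balanced $M\times M$ LCS instances, giving $(NM)^{1-o(1)}$ hardness with $n=\Theta(DM^\alpha)$ and $m=\Theta(DM)$. Your sketch for the ``large $\alpha$'' regime gestures at this selector idea, but the crucial realization is that the same weighted selector is already needed for $1<\alpha\le\sigma-1$.
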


In particular, for $|\Sigma| > 2$ the time complexity of WLCS is indeed ``intermediate'',  in between ``min-quadratic'' and ``rectangular''! To the best of our knowledge, this is the first result of fine-grained complexity establishing such an intermediate running time.

To prove Theorem~\ref{thm:main2} we first observe that the usual $\Oh(nm)$ dynamic programming algorithm also works for WLCS. For the other term $n + m^{|\Sigma|}$, we compress $x$ by running the sketching algorithm from Theorem~\ref{thm:sketchmain} with $L = m$. This yields a string $x' = C_m(x)$ of length $\Oh(m^{|\Sigma|})$ such that WLCS has the same value on $(x,y)$ and $(x',y)$, since every subsequence of length at most $m$ of $x$ is also a subsequence of $x'$, and vice versa. Running the $\Oh(nm)$-time algorithm on $(x',y)$ would yield total time $\Oh(n + m^{|\Sigma| + 1})$, which is too slow by a factor $m$. To obtain an improved running time, we use the fact that $x'$ consists of $\Oh(m^{|\Sigma| - 1})$ runs. We design an algorithm for WLCS on a run-length encoded string $x'$ consisting of $r$ runs and an uncompressed string $y$ of length $m$ running time $\Oh(rm)$. This generalizes algorithms for LCS with one run-length encoded string~\cite{ann2008fast,freschi2004longest,liu2008finding}. Together, we obtain time $\Oh(\min\{nm, n + m^{|\Sigma|}\})$. 
We then show a matching SETH-based lower bound by combining our construction of incompressible strings from our sketching lower bounds (Theorem~\ref{Combinatorial optimality of encoding}) with the by-now classic SETH-hardness proof of LCS~\cite{AbboudBVW15,BringmannK15}. 

\subsection{Further Related Work}

Analyzing the running time in terms of multiple parameters like $n,m,L$ has a long history for LCS~\cite{Apostolico86,ApostolicoG87,EppsteinGGI92,Hirschberg77,HuntS77,IliopoulosR09,Myers86,NakatsuKY82,WuMMM90}. 
Recently tight SETH-based lower bounds have been shown for all these algorithms~\cite{BKsoda18}. In the second part of this paper, we perform a similar complexity analysis on a weighted variant of LCS. 
This follows the majority of recent work on LCS, which focused on transferring the early successes and techniques to more complicated problems, such as 
longest common increasing subsequence
\cite{moosa2013computing,kutz2011faster,yang2005fast,chan2007efficient},
tree LCS
\cite{mozes2009fast},
and many more generalizations and variants of LCS, see, e.g.,
\cite{kuboi2017,castelli2017,tiskin2006longest,jiang2004longest,
alber2002towards,landau2003sparse,keller2009longest,gotthilf2010restricted,
pevzner1992matrix,iliopoulos2007algorithms}.
For brevity, here we ignore the equally vast literature on the closely related edit distance.

%\section{Preliminaries}
\subsection{Notation}

For a string $x$ of length $n$ over alphabet $\Sigma$, we write $x[i]$ for its $i$-th symbol, $x[{i \ldots j}]$ for the substring from the $i$-th to $j$-th symbol, and $|x|$ for its length. For $c \in \Sigma$ we write $|x|_c := |\{i \mid x_i = c\}|$. For strings $x,y$ we write $x \circ y$ for their concatenation, and for $k \in \mathbb{N}$ we write $x^k$ for the $k$-fold repetition $x \circ \ldots \circ x$.
A subsequence of $x$ is any string of the form $y = x[{i_1}] \circ x[{i_2}] \circ \ldots \circ x[{i_\ell}]$ with $1 \le i_1 < i_2 < \ldots < i_\ell \le |x|$; in this case we write $y \preceq x$.
A \emph{run} in $x$ is a maximal substring $x[i \ldots j] = c^{j-i+1}$, consisting of a single alphabet letter $c \in \Sigma$.
Recall that we suppress factors depending only on $|\Sigma|$ in $\Oh$-notation.

\section{Sketching LCS}

In this section design a sketch for LCS, proving Theorem~\ref{thm:sketchmain}.
Consider any string $z$ defined over alphabet $S \subseteq \Sigma$. We call $z$ a \emph{$(q,S)$-permutation string} if we can partition $z = z^{(1)} \circ z^{(2)} \circ \ldots \circ z^{(q)}$ such that each $z^{(i)}$ contains each symbol in $S$ at least once. Observe that a $(q,S)$ permutation string contains any string $y$ of length at most $q$ over the alphabet $S$ as a subsequence. 

\begin{claim} \label{cla:coresketchlcs}
Consider any string $x = x' \circ c \circ x''$, where $x', x''$ are strings over alphabet $\Sigma$ and $c \in \Sigma$. Let $S \subseteq \Sigma$. 
If some suffix of $x'$ is an $(L,S)$-permutation string and $c \in S$, then for all strings $y$ of length at most $L$ we have $y \preceq x$ if and only if $y \preceq x' \circ x''$. 
\end{claim}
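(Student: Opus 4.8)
The plan is to prove the two directions of the equivalence separately, where one direction is trivial and the other uses the permutation-string structure. The easy direction is that $y \preceq x' \circ x''$ implies $y \preceq x = x' \circ c \circ x''$: deleting the single symbol $c$ from $x$ yields $x' \circ x''$, and being a subsequence is preserved under deletion, so any $y$ that embeds into $x' \circ x''$ embeds into $x$. This holds for all $y$, with no length restriction.

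For the nontrivial direction, suppose $y \preceq x$ with $|y| \le L$; I want to show $y \preceq x' \circ x''$. Fix an embedding of $y$ into $x$, i.e.\ indices witnessing $y$ as a subsequence of $x' \circ c \circ x''$. If this embedding does not use the distinguished occurrence of $c$ between $x'$ and $x''$, then it already witnesses $y \preceq x' \circ x''$ and we are done. So assume the embedding uses that occurrence of $c$; write $y = y_1 \circ c \circ y_2$ accordingly, where $y_1$ is the prefix of $y$ mapped into $x'$ (more precisely, into the part of $x'$ strictly before the suffix that is the $(L,S)$-permutation string — see below), $c$ is the matched symbol, and $y_2$ is the suffix of $y$ mapped into $x''$. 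The idea is to re-embed $y_1 \circ c$ entirely into $x'$, using that some suffix of $x'$ is an $(L,S)$-permutation string: since $|y| \le L$, in particular the portion of $y$ that we need to place into this suffix has length at most $L$, and it is a string over $S$ (here is where we must be careful — see the obstacle below), hence it embeds into the $(L,S)$-permutation string suffix of $x'$. Concatenating this re-embedding with the unchanged embedding of $y_2$ into $x''$ gives $y \preceq x' \circ x''$.

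The main obstacle is making the decomposition of $y$ and the alphabet bookkeeping precise. Let $w$ denote the suffix of $x'$ that is an $(L,S)$-permutation string, so $x' = u \circ w$ for some prefix $u$. Given the embedding of $y$ into $x' \circ c \circ x''$ that uses the distinguished $c$, split $y = y_1 \circ y_2$ where $y_1$ consists of the symbols of $y$ mapped into $u$, and $y_2$ is the rest (mapped into $w \circ c \circ x''$). Then $y_1 \preceq u$ directly. Let $y_2 = p \circ s$ where $p$ is the part of $y_2$ mapped into $w \circ c$ and $s$ the part mapped into $x''$; then $s \preceq x''$ directly. It remains to show $p \preceq w$. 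Every symbol of $p$ is matched in $w \circ c$ to either a symbol of $w$ (which lies in $S$, since $w$ is over $S$) or to the distinguished $c$ (which lies in $S$ by hypothesis), so $p$ is a string over $S$. Also $|p| \le |y| \le L$. Since $w$ is an $(L,S)$-permutation string it contains every string over $S$ of length at most $L$ as a subsequence, in particular $p \preceq w$. Combining, $y = y_1 \circ p \circ s \preceq u \circ w \circ x'' = x' \circ x''$, completing the proof. The one thing to double-check in writing this up is the degenerate case where the embedding of $y$ touches neither $u$ nor the distinguished $c$ in the "expected" order — but since the splitting of $y$ follows the fixed embedding's index order, $y_1, p, s$ are genuine contiguous pieces of $y$ and the argument goes through verbatim.
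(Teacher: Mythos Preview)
Your proof is correct and follows essentially the same approach as the paper: split the embedding of $y$ into a prefix that stays in $x'$ before the permutation-string suffix, a middle piece over $S$ of length at most $L$ that re-embeds into the $(L,S)$-permutation string, and a suffix that stays in $x''$. The only cosmetic difference is that you choose the split point by where symbols land in $x$ (before or after the start of $w$), whereas the paper chooses it by the alphabet of the symbols of $y$ (last symbol not in $S$ before the matched $c$); both yield the same re-embedding argument.
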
 
  
\begin{proof} 
 The ``if''-direction is immediate. To prove the ``only if'', consider any subsequence $y$ of $x$ of length $d \leq L$ and let $y = x[{i_1}] \circ x[{i_2}] \circ \ldots \circ x[{i_d}]$. Let $\ell$ and $r$ be the length of $x'$ and $x''$, respectively. If $i_k \neq \ell+1$ for all $1 \le k \le d$, then clearly $y \preceq x' \circ x''$. Thus, assume that $i_k = \ell+1$ for some $k$. Let $a$ be minimal such that $x[{a \ldots \ell}]$ only contains symbols in~$S$. By assumption, $x[{a \ldots \ell}]$ is an $(L,S)$-permutation string, and $c = x[{\ell+1}] \in S$. Let $j \ge 1$ be the minimum index such that $x[{i_j}] \ldots x[{i_k}]$ only contains symbols in $S$. Since $j$ is minimal, $x[{i_{j-1}}] \notin S$ and thus $i_{b} < a$ for all $b<j$. Therefore $x[i_1] \circ x[i_2] \circ \ldots \circ x[i_{j-1}] \preceq x[0 \ldots a-1]$.  Since $x[{a \ldots \ell}]$ is an $(L,S)$-permutation string and $\lvert x[{i_j}] \circ \ldots \circ x[{i_k}] \rvert \leq d \le L$, it follows that $x[{i_j}] \circ \ldots \circ x[{i_k}]$ is a subsequence of $x[{a \ldots \ell}]$. Hence, $x[{i_1}] \circ \ldots \circ x[{i_k}] \preceq x'$ and $x[{i_{k+1}}] \circ \ldots \circ x[{i_d}] \preceq x''$, and thus $y \preceq x' \circ x''$. 
\end{proof} 
The above claim immediately gives rise to the following one-pass streaming algorithm.
\begin{algorithm}
\label{RoughSketch}
\begin{algorithmic}[1]
\STATE \textbf{initialize} $C_L(x)$ as the empty string
\FORALL{$i$ from $1$ to $\lvert x \rvert$}
  \IF{for all $S \subseteq \Sigma$ with $x[i] \in S$, no suffix of $C_L(x)$ is an $(L,S)$-permutation string}
    \STATE \textbf{set} $C_L(x)\gets C_L(x) \circ x[i]$
   \ENDIF
\ENDFOR
\STATE \textbf{return} $C_L(x)$
\end{algorithmic}
\caption{Outline for computing $C_L(x)$ given a string $x$ and an integer $L$}
\end{algorithm} 

By Claim~\ref{cla:coresketchlcs}, the string $C_L(x)$ returned by this algorithm satisfies the subsequence property (3) of Theorem~\ref{thm:sketchmain}. 
Note that any run in $C_L(x)$ has length at most $L$, since otherwise for $S = \{c\}$ we would obtain an $(L,S)$-permutation string followed by another symbol $c$, so that Claim~\ref{cla:coresketchlcs} would apply.
We now show the upper bounds on the length and the number of runs. Consider a substring $z = C_L(x)[{i \ldots j}]$ of $C_L(x)$, containing symbols only from $S \subseteq \Sigma$. We claim that $z$ consists of at most $ r_L(\lvert S \rvert ) := 2(L+1)^{\lvert S \rvert -1} -1$ runs. We prove our claim by induction on $\lvert S \rvert$. For $\lvert S \rvert = 1$, the claim holds trivially.  For $\lvert S\rvert > 1$ and any $k \ge 1$, let $i_k$ be the minimal index such that $z[{1 \ldots i_k}]$ is a $(k,S)$-permutation string, or $i_k = \infty$ if no such prefix of $z$ exists. Note that $i_L \ge |z|$, since otherwise a proper prefix of $z$ would be an $(L,S)$-permutation string, in which case we would have deleted the last symbol of $z$. 
The string $z[{i_{k-1}+1 \ldots i_k-1}]$ contains symbols only from $S \setminus \{z[{i_k}]\}$ and thus by induction hypothesis consists of at most $r_L(\lvert S \rvert -1)$ runs. 
Since $i_L \ge |z|$, we conclude that the number of runs in  $z$ is at most  
$L \cdot (r_L(\lvert S \rvert -1) + 1) \leq L \cdot 2 (L+1)^{|S|-2} \le 2 (L+1)^{\lvert S \rvert -1} -1 = r_L(|S|)$.
Thus the number of runs of $C_L(x)$ is at most $r_L(\s) \in \mathcal{O}(L^{\s -1})$, and since each run has length at most $L$ we obtain $\lvert C_L(x) \rvert \in \mathcal{O}(L^{\s})$. 

Algorithm 2 shows how to efficiently implement Algorithm 1 in time $\Oh(1)$ per symbol of~$x$. We maintain a counter $t_S$ (initialized to 0) and a set $Q_S$ (initialized to $\emptyset$) for every $S \subseteq \Sigma$ with the following meaning. After reading $x[{1 \ldots i}]$, let $j$ be minimal such that $x[{j \ldots i}]$ consists of symbols in $S$. Then $t_S$ is the maximum number $t$ such that $x[{j \ldots i}]$ is a $(t,S)$-permutation string. Moreover, let $k$ be minimal such that $x[{j\ldots k}]$ still is a $(t_S,S)$-permutation string. Then $Q_S \subseteq S$ is the set of symbols that appear in $x[{k+1 \ldots i}]$. In other words, in the future we only need to read the symbols in $S \setminus Q_S$ to complete a $(t_S + 1,S)$-permutation string. 
In particular, when reading the next symbol $x[{i+1}]$, in order to check whether Claim~\ref{cla:coresketchlcs} applies we only need to test whether for any $S \subseteq \Sigma$ with $x[i+1] \in S$ we have $t_S \ge L$. Updating $t_S$ and $Q_S$ is straightforward, and shown in Algorithm 2.

\begin{algorithm}[H]
\begin{algorithmic}[1]
  \STATE \textbf{set} $t_s \gets 0$, $Q_S \gets \emptyset$ for all $S \subseteq \Sigma$
  \STATE \textbf{set} $C_L(x)$ to the empty string
  \FORALL{$i$ from $1$ to $\lvert x \rvert$}
    \IF{$t_S < L$ for all $S \subseteq \Sigma$ with $x[i] \in S$}
    \STATE \textbf{set} $C_L(x) \gets C_L(x) \circ x[i]$
     \FORALL{$S$ such that $x[i] \in S$}
        \STATE \textbf{set} $Q_S \gets Q_S \cup \left\{x[i]\right\}$
        \IF{$Q_S = S$}
         \STATE \textbf{set} $Q_S \gets  \emptyset $
         \STATE \textbf{set} $t_S \gets t_S +1$
       \ENDIF
     \ENDFOR
     \FORALL{$S$ such that $x[i] \notin S$}
        \STATE \textbf{set} $t_S \gets 0$
        \STATE \textbf{set} $Q_S \gets \emptyset$
     \ENDFOR
    \ENDIF    
  \ENDFOR
\end{algorithmic}
\caption{Computing $C_L(x)$ in time $\Oh(1)$ per symbol of $x$}
\end{algorithm}

Since we assume $\s$ to be constant, each iteration of the loop runs in time $\Oh(1)$, and thus the algorithm determines $C_L(x)$ in time $\mathcal{O}(n)$.
This finishes the proof of Theorem~\ref{thm:sketchmain}.

\section{Optimality of the Sketch}
In this section we show that the sketch $C_L(x)$ is optimal in many ways. 
First, we show that the length and the number of runs are optimal for any sketch that replaces $x$ by any other string $z$ with the same set of subsequences of length at most $L$.
\begin{theorem}
\label{Combinatorial optimality of encoding}
For any $L$ and $\Sigma$ there exists a string $x$ such that for any string $z$ with $\{y \mid y \preceq x, \, |y| \le L\} = \{y \mid y \preceq z, \, |y| \le L\}$ we have $\lvert z \rvert = \Omega(L^{\s})$ and $z$ consists of $\Omega(L^{\s-1})$ runs. 
\end{theorem}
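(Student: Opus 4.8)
The plan is to exhibit a single hard string $x$ over $\Sigma$ (with $\s = \sigma$) whose set of subsequences of length at most $L$ forces any equivalent string $z$ to be long and run-heavy. The natural candidate is a string that is, in a precise sense, the "worst case" output of the sketching algorithm: take $x$ to be a string built by nesting permutation-string blocks, e.g. for $\Sigma = \{1,\dots,\sigma\}$ define $x$ recursively so that $x$ restricted to any alphabet of size $k$ looks like roughly $L$ repetitions of a block that restricted to size $k-1$ is again maximally dense, with single-letter runs of length exactly $L$ at the bottom level. Concretely one can set $P_1 = 1^L$ and $P_k = \bigl(P_{k-1} \circ k^L\bigr)^{L}$ (up to reindexing which letter plays the role of the innermost one, rotating through $\Sigma$ so that the construction is symmetric), and let $x = P_\sigma$; this has length $\Theta(L^\sigma)$ and $\Theta(L^{\sigma-1})$ runs, matching the upper bound of Theorem~\ref{thm:sketchmain}, so it is a plausible extremal example.

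The heart of the argument is a lower bound on $|z|$ and on the number of runs of $z$. First I would observe that the number-of-runs bound implies the length bound only in the wrong direction, so both must be argued; but in fact it suffices to prove the runs bound $\Omega(L^{\sigma-1})$ and, separately, that $x$ itself contains $\Theta(L^{\sigma-1})$ runs each needing its own "copy" in $z$ — more carefully, that $z$ must contain $\Omega(L^\sigma)$ symbols because certain length-$\le L$ subsequences of $x$ can only be embedded in $z$ by using up many distinct positions. The cleanest route is an adversary/potential argument: identify a family $\mathcal{Y}$ of "witness" subsequences $y$ of $x$, each of length at most $L$, such that (i) every $y \in \mathcal{Y}$ must embed into $z$, and (ii) any two distinct $y, y' \in \mathcal{Y}$ that agree on a long prefix are forced, by a greedy-embedding argument, to branch at distinct positions of $z$. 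Counting the branching positions across $\mathcal{Y}$ then lower-bounds the number of runs (each branch corresponds to a run boundary, since within a run no "choice" is possible) and, by a similar but finer accounting tracking how many positions are consumed before each branch, lower-bounds $|z|$. The witnesses themselves should be exactly the strings of length $\le L$ over $\Sigma$ that are subsequences of $x$ but "just barely" — i.e. the maximal ones, such as $(1\,2\,\cdots\,\sigma)^{L}$ truncated to length $L$ and its variants obtained by choosing, at each of the $\approx L/\sigma$ "rounds," which letters to include; there are $\Theta(L^{\sigma-1})$-ish such choices in the relevant sense, and this is what produces the bound.

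A convenient way to make the branching argument rigorous is to use the canonical greedy left-to-right embedding of a string $y$ into $z$ (match $y[1]$ to the first occurrence of $y[1]$ in $z$, then $y[2]$ to the first occurrence after that, and so on) and to note that if $y$ is a subsequence of $z$ at all then the greedy embedding succeeds. Then for $y, y' \in \mathcal{Y}$ sharing prefix $p$ but differing in the next symbol, the greedy embeddings agree on the prefix $p$ and map the next symbols of $y$ and $y'$ to two positions of $z$ that lie in (possibly different) runs but in any case witness that $z$, read from the common position onward, must contain both those distinct letters; pushing this through the tree of all witnesses, indexed by the sequence of per-round choices, yields a lower bound of $\Omega(L^{\sigma-1})$ on the number of distinct run-transitions, hence on the number of runs, and by tracking run lengths (each run contributing at most $L$ to the length but the structure forcing $\Omega(L)$ length between consecutive essential branches at the bottom level) a lower bound of $\Omega(L^\sigma)$ on $|z|$.

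The main obstacle I anticipate is making the counting of witnesses and branches tight rather than merely polynomially large: a naive family of witnesses easily gives a bound like $\Omega(L)$ or $\Omega(L^{\sigma-1}/\mathrm{poly})$, and getting the clean $\Omega(L^{\sigma-1})$ and $\Omega(L^\sigma)$ constants requires choosing the hard string $x$ and the witness family $\mathcal{Y}$ so that the induced "branching tree" has branching factor essentially $L$ at each of the $\sigma-1$ meaningful levels, with no loss. I expect the induction on $|\Sigma|$ (or on the nesting depth of the $P_k$ construction) to be the right framework: assume the statement for alphabets of size $\sigma-1$ with parameter $L$, and lift it by prepending/interleaving $L$ blocks over a fresh letter, arguing that an equivalent $z$ for $x = P_\sigma$ must, between consecutive occurrences of that fresh letter's "forced" appearances, contain a substring equivalent (for length-$\le L$ subsequences over the smaller alphabet) to $P_{\sigma-1}$, to which the induction hypothesis applies — and then summing the $L$ contributions. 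Verifying the "equivalent substring" claim, i.e. that the global subsequence constraint localizes correctly, is the technical crux and is exactly where Claim~\ref{cla:coresketchlcs}-style reasoning (now used as a lower-bound tool rather than for compression) will be needed.
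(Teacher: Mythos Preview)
Your final paragraph identifies the right skeleton --- induction on $|\Sigma|$, with the fresh letter partitioning $z$ into blocks each of which must be equivalent to the smaller-alphabet hard string --- and this is exactly what the paper does. But the specific construction you propose and the ``branching/witness'' machinery in the middle of the proposal both have problems that would prevent the argument from closing.

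First, the construction. With $P_k = (P_{k-1}\circ k^L)^L$ the letter $k$ occurs $L^2$ times in $P_k$, so the only way to pin down how many $k$'s $z$ must contain is to test with $k^{L^2}$ versus $k^{L^2+1}$ --- strings of length far exceeding $L$. The paper instead sets $m = L/\s$ and uses a \emph{single} separator letter between blocks: $x^{(k)} = (x^{(k-1)}\circ k)^m \circ x^{(k-1)}$. Now there are exactly $m$ copies of $k$, and since $k^m$ and $k^{m+1}$ both have length at most $L$, one can test $k^m \preceq z$ and $k^{m+1}\not\preceq z$ to conclude $|z|_k = m$ \emph{exactly}. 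This exact count is the step you are missing: it gives a clean decomposition $z = z^{(0)}\circ k \circ z^{(1)}\circ k \circ \cdots \circ k \circ z^{(m)}$ with each $z^{(i)}$ over the smaller alphabet.

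Second, the localization. You flag ``verifying the equivalent-substring claim'' as the technical crux and propose to attack it with Claim~\ref{cla:coresketchlcs}-style reasoning, but that claim goes the wrong way (it shows you can \emph{delete} symbols). The actual trick is much simpler: for any $y'$ of length at most $mk$ over $\Sigma_{k-1}$, the sandwich $y = k^i \, y' \, k^{m-i}$ has length at most $m(k+1)\le L$, and $y\preceq z$ iff $y'\preceq z^{(i)}$ (the $k^i$ prefix and $k^{m-i}$ suffix are forced to consume the first $i$ and last $m-i$ separator $k$'s). The same holds for $x^{(k)}$ in place of $z$, with $x^{(k-1)}$ in place of $z^{(i)}$. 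So each $z^{(i)}$ has the same length-$\le mk$ subsequences as $x^{(k-1)}$, and the induction hypothesis applies directly to give $|z^{(i)}|\ge m^{k}$ and at least $m^{k-1}$ runs; summing over the $m+1$ pieces gives the result.

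The greedy-embedding/branching-tree argument you spend most of the proposal on is not obviously wrong, but it is vague at exactly the points where tightness matters (you acknowledge this), and once you have the exact-count-plus-sandwich step above it becomes unnecessary. I would drop it and go straight to the induction.
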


 Let $\Sigma = \left\{0,1, \ldots, \sigma-1 \right\}$ and $\Sigma_k = \left\{0,1,\ldots, k-1 \right\}$. We construct a family of strings $x^{(k)}$ recursively as follows, where $m := L / \s$:
\begin{align*} 
 &x^{(0)} = 0^{m} \\
 &x^{(k)} = (x^{(k-1)} \circ k)^m \circ x^{(k-1)} \quad \text{for } 1 \le k \le \sigma-1.
\end{align*} 

Theorem~\ref{Combinatorial optimality of encoding} now follows from the following inductive claim, for $k = \sigma-1$.
\begin{claim}
For any string $z$ with $\{y \mid y \preceq x^{(k)}, \, |y| \le m(k+1)\} = \{y \mid y \preceq z, \, |y| \le m(k+1)\}$ we have $\lvert z \rvert \geq m^{k+1}$ and the number of runs in $z$ is at least $m^{k}$.
\end{claim}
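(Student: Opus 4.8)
The plan is to prove both bounds simultaneously by induction on $k$, exploiting the self-similar structure of $x^{(k)} = (x^{(k-1)} \circ k)^m \circ x^{(k-1)}$. The base case $k=0$ is immediate: $x^{(0)} = 0^m$, and the string $0^m$ is a subsequence of itself, so any $z$ with the same subsequences of length $\le m$ must contain $0^m$ as a subsequence, forcing $|z| \ge m$ and (trivially) at least $1 = m^0$ run. For the inductive step, I would fix a string $z$ agreeing with $x^{(k)}$ on all subsequences of length $\le m(k+1)$ and try to locate $m$ disjoint ``blocks'' inside $z$, each of which behaves like $x^{(k-1)}$ with respect to short subsequences, separated by $m-1$ occurrences of the symbol $k$.

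The key combinatorial step: I would use carefully chosen ``witness'' subsequences of $x^{(k)}$ to pin down structure in $z$. Since $x^{(k)}$ contains $(w \circ k)^{m-1}$ as a subsequence for $w = $ (some maximal short subsequence of $x^{(k-1)}$), and since $z$ must contain all these, I can greedily parse $z$ from left to right: $z = z_1 \circ k \circ z_2 \circ k \circ \cdots \circ k \circ z_m \circ z_{\mathrm{rest}}$ where each boundary $k$ is the first available occurrence after matching a long-enough prefix pattern. The heart of the argument is to show that each $z_i$ must itself have the same short-subsequence set as $x^{(k-1)}$ (or at least contain every subsequence of $x^{(k-1)}$ of length $\le mk$). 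This follows because any subsequence $y$ of $x^{(k-1)}$ of length $\le mk$ can be extended to a subsequence $0^{m} \cdot \text{(stuff)} \cdots y \cdots \text{(stuff)}$ of $x^{(k)}$ of length $\le m(k+1)$ that forces $y$ to land entirely within the $i$-th block of $z$ — because the material before and after $y$ in this witness ``uses up'' the other $m-1$ copies and the separating $k$'s. Applying the induction hypothesis to each $z_i$ gives $|z_i| \ge m^k$ and at least $m^{k-1}$ runs in $z_i$; summing over $i=1,\dots,m$ yields $|z| \ge m \cdot m^k = m^{k+1}$ and at least $m \cdot m^{k-1} = m^k$ runs (the separating symbols $k$ only add runs, never merge the block counts since $k \notin \Sigma_k$).

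The main obstacle I anticipate is making the block-parsing rigorous — specifically, proving that the greedy left-to-right parse of $z$ really does isolate $m$ full copies of the $x^{(k-1)}$-behavior rather than, say, having $z$ ``cheat'' by interleaving material cleverly so that no single contiguous block contains all short subsequences of $x^{(k-1)}$. The resolution is the witness-extension argument above: for each target subsequence $y \preceq x^{(k-1)}$ one designs a specific length-$\le m(k+1)$ superstring of $y$ that, when matched into $z$, has no choice but to route all of $y$ through the $i$-th block, because the $m-1$ separator symbols $k$ and the surrounding maximal $x^{(k-1)}$-patterns are rigid enough. A secondary technical point is ensuring the parse has exactly $m$ blocks and doesn't run out of $k$'s early or produce more than $m$ — this uses that $x^{(k)}$ has exactly $m$ copies of $k$ at the top level, so $z$ contains $k^m$ as a subsequence but (via an appropriate witness of length $\le m(k+1)$, since $m(k+1) \ge m+1$) one should also control that no subsequence forces $k^{m+1}$; actually $z$ may contain more $k$'s, so I would instead define blocks by the parse and only need the lower bound on their count, which the $k^m$ subsequence guarantees.
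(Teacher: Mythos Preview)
Your overall inductive strategy is right, but there is a real gap: you are not using the full strength of the hypothesis, which is an \emph{equality} of subsequence sets, not just a containment. You write that ``$z$ may contain more $k$'s''; this is false. Since $x^{(k)}$ contains exactly $m$ occurrences of the symbol $k$, we have $k^{m+1}\not\preceq x^{(k)}$, and since $|k^{m+1}|=m+1\le m(k+1)$ (as $k\ge 1$), the hypothesis forces $k^{m+1}\not\preceq z$ as well. Together with $k^m\preceq x^{(k)}$ (hence $k^m\preceq z$), this gives $|z|_k=m$ exactly. So $z$ decomposes canonically as $z=z^{(0)}\circ k\circ z^{(1)}\circ k\circ\cdots\circ k\circ z^{(m)}$ with each $z^{(i)}$ over the smaller alphabet $\Sigma_k\setminus\{k\}$, and no greedy parse or vague ``long-enough prefix pattern'' is needed.

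The second place the missing direction bites is the induction step itself. You say it suffices that each block ``at least contain every subsequence of $x^{(k-1)}$ of length $\le mk$'', but the induction hypothesis as stated requires the full equality of subsequence sets for $z^{(i)}$ and $x^{(k-1)}$. With the canonical decomposition above, the witness $y=k^{\,i}\,y'\,k^{\,m-i}$ (for any $y'$ over $\Sigma_k\setminus\{k\}$ with $|y'|\le mk$) gives a clean two-sided equivalence: $y'\preceq z^{(i)}\iff y\preceq z\iff y\preceq x^{(k)}\iff y'\preceq x^{(k-1)}$. The leftmost ``iff'' uses that $z$ has exactly $m$ symbols $k$ and that $z^{(i)}$ contains none; your greedy parse cannot deliver this equivalence because it does not control extra occurrences of $k$ inside the blocks. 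Once you have the two-sided equivalence, the induction hypothesis applies verbatim to each of the $m+1$ blocks, and summing gives the claimed bounds.
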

\begin{proof}
 We use induction on $k$. For $k=0$, since $y = 0^m \preceq x^{(0)}$ we have $z=0^{m'}$ with $m' \geq m$ and the number of runs in $z$ is exactly $1$. For any $k>0$, if $|x^{(k)}|_k > |z|_k$ then $k^{m} \preceq x^k$ but $ k^m \npreceq z$, and similarly if $|x^{(k)}|_k < |z|_k$ then $k^{m+1} \preceq z$ but $k^{m+1} \npreceq x^{(k)}$ (note that $m(k+1) \geq m+1$ since $k \geq 1$, and thus $y$ can be $k^{m+1}$). This implies $|z|_k = m$ and thus we have $z = z^{(0)} \circ k \circ z^{(1)} \circ k \circ \ldots  \circ k \circ z^{(m)}$, where each $z^{(i)}$ is a string on alphabet $\Sigma_{k-1}$. 
 Hence, for any $0 \le i \le m$ and string $y'$ of length at most $mk$, we have $y = k^i y' k^{m-i} \preceq z$ if and only if $y' \preceq z^{(i)}$. Similarly, $y \preceq x^{(k)}$ holds if and only if $y' \preceq x^{(k-1)}$. Since $y \preceq z$ is equivalent to $y \preceq x$ by assumption, we obtain that $y' \preceq z^{(i)}$ is equivalent to $y' \preceq x^{(k-1)}$. By induction hypothesis, $z^{(i)}$ has length at least $m^{k}$ and consists of at least $m^{k-1}$ runs. Summing over all~$i$, string $z$ has length at least $m^{k+1}$ and consists of at least $m^{k}$ runs.
\end{proof} 

Note that the run-length encoding of $C_L(x)$ has bit length $\mathcal{O}(L^{\s-1} \log L)$, since $C_L(x)$ consists of $\mathcal{O}(L^{\s-1})$ runs, each of which can be encoded using $\mathcal{O}(\log L)$ bits. We now show that this sketch has optimal size, even in the setting of \emph{Subsequence Sketching}: Alice is given a string $x$ of length $n$ over alphabet $\Sigma$ and a number $L$ and computes $\sk_L(x)$. Bob is then given $\sk_L(x)$ and a string $y$ of length at most\footnote{In the introduction, we used a slightly different definition where Bob is given a string of length \emph{exactly}~$L$. This might seem slightly weaker, but in fact the two formulations are equivalent (up to increasing $L$ by 1), as can be seen by replacing $x$ by $x' = 0^L 1 x$ and $y$ by $y' = 0^{L-|y|} 1 y$. Then $y \preceq x$ if and only if $y' \preceq x'$, and $y'$ has fixed length $L+1$.} $L$ and decides whether $y$ is a subsequence of~$x$.

We construct the following hard strings for this setting, similarly to the previous construction. 
Let $\Sigma = \left\{0,1,2, \ldots, \sigma-1 \right\}$ and $m \in \mathbb{N}$. Consider any vector $z \in \{0,\ldots,m-1\}^{k}$, where $k := m^{\sigma-1}$. We define the string $x = x(z)$ recursively as follows; see Figure \ref{x(z)string} for an illustration: 
\begin{align*}
x(z) &= x^{(\sigma-1,0)}\\
x^{(c,i)} &= \Big( \bigcirc_{j=0}^{m-2} x^{(c-1,m \cdot i + j)} \circ c \Big) \circ x^{(c-1,m \cdot i + m-1)} \quad \text{for }1 \leq c \leq \sigma-1 \\
x^{(0,i)} &= 0^{z[i]}  
\end{align*}
A straightforward induction shows that $| x(z) | \le m^{\sigma} - 1$. Moreover, for any $0 \le i < m^{\sigma-1}$ with base-$m$ representation $i = \sum_{j=0}^{\sigma-2} i_j \cdot m^j$, where $0 \le i_j < m$, we define the following string; see Figure \ref{patstring} for an illustration: 
\[ \pat(i,y) := \big(\bigcirc_{j=1}^{\sigma-1} (\sigma-j)^{i_{\sigma-1-j}}\big) \circ y \circ \big(\bigcirc_{j=1}^{\sigma-1} j^{m - 1-i_{j-1}}\big). \]

\begin{figure}
   \begin{tikzpicture}[scale=0.8]

\draw[black, very thick] (-8,0+2)rectangle (9,0.5+2);
\node at (0.5,0.25+2) {$\scriptstyle x(z)$};

\draw[black, very thick, fill=blue!10] (-8,0+1)rectangle (-3,0.5+1);
\node at (-5.5,0.25+1) {$\scriptstyle x^{(1,0)}$};

\draw[black, very thick, fill=green!10] (-3,0+1)rectangle (-2,0.5+1);
\node at (-2.5,0.25+1) {$\scriptstyle 2$};

\draw[black, very thick, fill=blue!10] (-2,0+1)rectangle (3,0.5+1);
\node at (0,0.25+1) {$\scriptstyle x^{(1,1)}$};

\draw[black, very thick, fill=green!10] (3,0+1)rectangle (4,0.5+1);
\node at (3.5,0.25+1) {$\scriptstyle 2$};

\draw[black, very thick, fill=blue!10] (4,0+1)rectangle (9,0.5+1);
\node at (6.5,0.25+1) {$\scriptstyle x^{(1,2)}$};

\draw[black, very thick, fill=red!10] (-8,0)rectangle (-7,0.5);
\node at (-7.5,0.25) {$\scriptstyle x^{(0,0)}$};

\draw[black, very thick, fill=blue!10] (-7,0)rectangle (-6,0.5);
\node at (-6.5,0.25) {$\scriptstyle 1$};

\draw[black, very thick, fill=red!10] (-6,0)rectangle (-5,0.5);
\node at (-5.5,0.25) {$\scriptstyle x^{(0,1)}$};

\draw[black, very thick, fill=blue!10] (-5,0)rectangle (-4,0.5);
\node at (-4.5,0.25) {$\scriptstyle 1$};

\draw[black, very thick, fill=red!10] (-4,0)rectangle (-3,0.5);
\node at (-3.5,0.25) {$\scriptstyle x^{(0,2)}$};

\draw[black, very thick, fill=green!10] (-3,0)rectangle (-2,0.5);
\node at (-2.5,0.25) {$\scriptstyle 2$};

\draw[black, very thick, fill=red!10] (-2,0)rectangle (-1,0.5);
\node at (-1.5,0.25) {$\scriptstyle x^{(0,3)}$};

\draw[black, very thick, fill = blue!10] (-1,0)rectangle (0,0.5);
\node at (-0.5,0.25) {$\scriptstyle 1$};

\draw[black, very thick, fill=red!10] (0,0)rectangle (1,0.5);
\node at (0.5,0.25) {$\scriptstyle x^{(0,4)}$};

\draw[black, very thick, fill=blue!10] (1,0)rectangle (2,0.5);
\node at (1.5,0.25) {$\scriptstyle 1$};

\draw[black, very thick, fill=red!10] (2,0)rectangle (3,0.5);
\node at (2.5,0.25) {$\scriptstyle x^{(0,5)}$};

\draw[black, very thick, fill=green!10] (3,0)rectangle (4,0.5);
\node at (3.5,0.25) {$\scriptstyle 2$};

\draw[black, very thick, fill=red!10] (4,0)rectangle (5,0.5);
\node at (4.5,0.25) {$\scriptstyle x^{(0,6)}$};

\draw[black, very thick, fill=blue!10] (5,0)rectangle (6,0.5);
\node at (5.5,0.25) {$\scriptstyle 1$};

\draw[black, very thick, fill=red!10] (6,0)rectangle (7,0.5);
\node at (6.5,0.25) {$\scriptstyle x^{(0,7)}$};

\draw[black, very thick, fill=blue!10] (7,0)rectangle (8,0.5);
\node at (7.5,0.25) {$\scriptstyle 1$};

\draw[black, very thick, fill=red!10] (8,0)rectangle (9,0.5);
\node at (8.5,0.25) {$\scriptstyle x^{(0,8)}$};

\draw[black, very thick, fill = red!10] (-8,0-1)rectangle (-7,0.5-1);
\node at (-7.5,0.25-1) {$\scriptstyle 0^{z[0]}$};

\draw[black, very thick, fill = blue!10] (-7,0-1)rectangle (-6,0.5-1);
\node at (-6.5,0.25-1) {$\scriptstyle 1$};

\draw[black, very thick, fill = red!10] (-6,0-1)rectangle (-5,0.5-1);
\node at (-5.5,0.25-1) {$\scriptstyle 0^{z[1]}$};

\draw[black, very thick, fill = blue!10] (-5,0-1)rectangle (-4,0.5-1);
\node at (-4.5,0.25-1) {$\scriptstyle 1$};

\draw[black, very thick, fill = red!10] (-4,0-1)rectangle (-3,0.5-1);
\node at (-3.5,0.25-1) {$\scriptstyle 0^{z[2]}$};

\draw[black, very thick, fill = green!10] (-3,0-1)rectangle (-2,0.5-1);
\node at (-2.5,0.25-1) {$\scriptstyle 2$};

\draw[black, very thick, fill = red!10] (-2,0-1)rectangle (-1,0.5-1);
\node at (-1.5,0.25-1) {$\scriptstyle 0^{z[3]}$};

\draw[black, very thick, fill = blue!10] (-1,0-1)rectangle (0,0.5-1);
\node at (-0.5,0.25-1) {$\scriptstyle 1$};

\draw[black, very thick, fill = red!10] (0,0-1)rectangle (1,0.5-1);
\node at (0.5,0.25-1) {$\scriptstyle 0^{z[4]}$};

\draw[black, very thick, fill = blue!10] (1,0-1)rectangle (2,0.5-1);
\node at (1.5,0.25-1) {$\scriptstyle 1$};

\draw[black, very thick, fill = red!10] (2,0-1)rectangle (3,0.5-1);
\node at (2.5,0.25-1) {$\scriptstyle 0^{z[5]}$};

\draw[black, very thick, fill = green!10] (3,0-1)rectangle (4,0.5-1);
\node at (3.5,0.25-1) {$\scriptstyle 2$};

\draw[black, very thick, fill = red!10] (4,0-1)rectangle (5,0.5-1);
\node at (4.5,0.25-1) {$\scriptstyle 0^{z[6]}$};

\draw[black, very thick, fill = blue!10] (5,0-1)rectangle (6,0.5-1);
\node at (5.5,0.25-1) {$\scriptstyle 1$};

\draw[black, very thick, fill = red!10] (6,0-1)rectangle (7,0.5-1);
\node at (6.5,0.25-1) {$\scriptstyle 0^{z[7]}$};

\draw[black, very thick, fill = blue!10] (7,0-1)rectangle (8,0.5-1);
\node at (7.5,0.25-1) {$\scriptstyle 1$};

\draw[black, very thick, fill= red!10] (8,0-1)rectangle (9,0.5-1);
\node at (8.5,0.25-1) {$\scriptstyle 0^{z[8]}$};

\end{tikzpicture} 
 
 \caption{Illustration of constructing $x(z)$ from $z$. Let $m = \sigma = 3$. Consider a string $z$ of length $m^{\sigma-1} = 9$. The figure shows the construction of $x(z)$ from $z$}
 \label{x(z)string}
\end{figure}
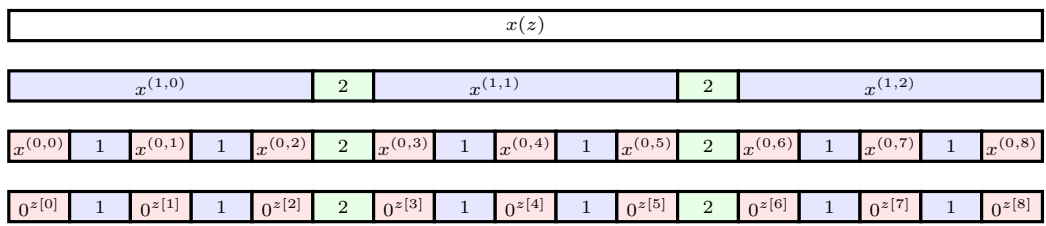

\begin{figure}
    \centering

\begin{tikzpicture}[scale=0.812]
  %Pattern String 
\draw[black, very thick, fill=green!10] (0,-2)rectangle (1,-2.5);
\node at (0.5,-2.25) {$\scriptstyle 2$};

\draw[black, very thick, fill=blue!10] (1,-2)rectangle (2,-2.5);
\node at (1.5,-2.25) {$\scriptstyle 1$};

\draw[black, very thick, fill=red!10] (2,-2)rectangle (3,-2.5);
\node at (2.5,-2.25) {$\scriptstyle y$};

\draw[black, very thick, fill=blue!10] (3,-2)rectangle (4,-2.5);
\node at (3.5,-2.25) {$\scriptstyle 1$};

\draw[black, very thick, fill=green!10] (4,-2)rectangle (5,-2.5);
\node at (4.5,-2.25) {$\scriptstyle 2$};

\end{tikzpicture}

    \caption{Illustration of the construction of $pat(i,y)$. Let $m = \sigma =3$. Consider $i = 4 = 1 \cdot 3^{1} + 1 \cdot 3^{0}$. Therefore $\pat(i,y) = 21y12$.}
    \label{patstring}
\end{figure}
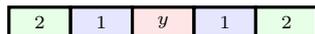

The following claim shows that testing whether $\pat(i,y)$ is a subsequence of $x(z)$ allows to infer the entries of $z$.

\begin{claim}
\label{pattern claim}
We have $\pat(i,y) \preceq x(z)$ if and only if $y \preceq 0^{z[i]}$.
\end{claim}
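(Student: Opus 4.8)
The plan is to prove Claim~\ref{pattern claim} by induction on $\sigma$, or more precisely by unwinding the recursive definition of $x(z)$ layer by layer, matching the prefix/suffix blocks of $\pat(i,y)$ against the separator symbols inside $x(z)$. The key structural observation is that $\pat(i,y)$ has the shape $(\sigma-1)^{i_{\sigma-2}} (\sigma-2)^{i_{\sigma-3}} \cdots$, then $y$ in the middle (over alphabet $\{0\}$, since $y \preceq 0^{z[i]}$ is what we want to characterize), then $\cdots 2^{m-1-i_1} 1^{m-1-i_0}$; so the outermost layer of $\pat(i,y)$ consists of $(\sigma-1)^{i_{\sigma-2}}$ followed by "stuff over alphabet $\{0,\ldots,\sigma-2\}$" followed by $(\sigma-1)^{m-1-i_{\sigma-2}}$. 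Meanwhile $x(z) = x^{(\sigma-1,0)} = \bigcirc_{j=0}^{m-2}\big(x^{(\sigma-2,j)}\circ(\sigma-1)\big)\circ x^{(\sigma-2,m-1)}$, which contains exactly $m-1$ copies of the symbol $\sigma-1$, separating $m$ blocks each over the smaller alphabet $\{0,\ldots,\sigma-2\}$.

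The first main step is to establish the "block selection lemma": for a string of the form $w = (\sigma-1)^a \circ w' \circ (\sigma-1)^b$ with $w'$ over $\{0,\ldots,\sigma-2\}$ and $a+b \le m-1$, we have $w \preceq x^{(\sigma-1,i)}$ if and only if $w' \preceq x^{(\sigma-2, m\cdot i + a)}$ — in words, consuming $a$ separators from the left and $b$ from the right forces $w'$ into the $(a{+}1)$-st of the $m$ smaller blocks, namely the one indexed $m\cdot i + a$. One direction is immediate: given $w' \preceq x^{(\sigma-2,m\cdot i + a)}$, greedily match the first $a$ symbols $(\sigma-1)$ against the first $a$ separators, embed $w'$ in block $m\cdot i + a$, then match the trailing $b$ copies of $\sigma-1$ against any $b$ of the remaining $m-1-a$ separators (there are enough since $b \le m-1-a$). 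The other direction is the crux: one argues that a leftmost embedding of $w$ uses the first $a$ separators for the leading block, that the middle part $w'$ — being free of the symbol $\sigma-1$ — must lie entirely within a single one of the $m$ blocks, and that this block cannot be earlier than index $a$ (the $a$ leading separators are already consumed) nor later (else fewer than $b$ separators remain to its right to absorb the trailing $(\sigma-1)^b$). A symmetric greedy/exchange argument pins it to exactly block $m\cdot i + a$.

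Once this lemma is in hand, the proof of the claim is a clean iteration. Write $i = \sum_{j=0}^{\sigma-2} i_j m^j$. Apply the block selection lemma at the top layer with $a = i_{\sigma-2}$, $b = m-1-i_{\sigma-2}$ (note $a+b = m-1$ as required, and the middle part of $\pat(i,y)$ is exactly $\big(\bigcirc_{j=2}^{\sigma-1}(\sigma-j)^{i_{\sigma-1-j}}\big)\circ y \circ \big(\bigcirc_{j=2}^{\sigma-1} j^{m-1-i_{j-1}}\big)$, which is precisely $\pat(i', y)$ for the "shifted" index $i'$ over alphabet $\{0,\ldots,\sigma-2\}$). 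This reduces $\pat(i,y)\preceq x(z)$ to $\pat(i',y) \preceq x^{(\sigma-2, m\cdot 0 + i_{\sigma-2})}$. Iterating $\sigma-1$ times, the cumulative block index telescopes to $\sum_{j=0}^{\sigma-2} i_j m^j = i$ (this is just reading off the base-$m$ digits one at a time, most significant first, which matches the indexing $m\cdot(\text{old index}) + (\text{new digit})$ in the recursion), so we arrive at $y \preceq x^{(0,i)} = 0^{z[i]}$, as claimed.

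The main obstacle is the "only if" direction of the block selection lemma — making rigorous that the $\sigma{-}1$-free middle segment $w'$ is confined to one block and that the counting of available separators on each side forces exactly the right block. I would handle it with an explicit exchange argument on a leftmost embedding: show that any embedding of $(\sigma-1)^a$ can be pushed to the first $a$ separators without loss, that $w'$ occupies positions strictly between two consecutive separators (hence one block) because it contains no $\sigma-1$, and that the index of that block is $\ge a$ (available separators to the left) and $\le m-1-b$ (available separators to the right), forcing equality with $m\cdot i + a$ since $a + b = m-1$ at every layer. The edge cases $\sigma = 2$ (base of the induction, where $x(z) = 0^{z[0]}$ and $\pat(0,y) = y$) and $a = 0$ or $b = 0$ should be checked but are routine.
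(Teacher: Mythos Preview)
Your approach is essentially identical to the paper's: both proofs peel off one alphabet layer at a time via the equivalence $c^{i_{c-1}} \circ z' \circ c^{m-1-i_{c-1}} \preceq x^{(c,j)} \iff z' \preceq x^{(c-1,\, m\cdot j + i_{c-1})}$ (your ``block selection lemma'', which the paper states tersely as following ``immediately after matching all $c$'s''), and then iterate down to the base case $y \preceq x^{(0,i)} = 0^{z[i]}$. One small fix: your lemma as stated for $a+b \le m-1$ is false in the ``only if'' direction when $a+b < m-1$ (then $w'$ could sit in any block with index in $[a,\, m-1-b]$, not just block $a$), so state it only for $a+b = m-1$ --- which, as you already observe, is the only case you actually use.
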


\begin{proof} 

\begin{figure}
    \centering

\begin{tikzpicture}[scale=0.8]

\draw[black, very thick, fill=red!10] (-8,0)rectangle (-7,0.5);
\node at (-7.5,0.25) {$\scriptstyle 0^{z[0]}$};

\draw[black, very thick, fill=blue!10] (-7,0)rectangle (-6,0.5);
\node at (-6.5,0.25) {$\scriptstyle 1$};

\draw[black, very thick, fill=red!10] (-6,0)rectangle (-5,0.5);
\node at (-5.5,0.25) {$\scriptstyle 0^{z[1]}$};

\draw[black, very thick, fill=blue!10] (-5,0)rectangle (-4,0.5);
\node at (-4.5,0.25) {$\scriptstyle 1$};

\draw[black, very thick, fill=red!10] (-4,0)rectangle (-3,0.5);
\node at (-3.5,0.25) {$\scriptstyle 0^{z[2]}$};

\draw[black, very thick, fill=green!10] (-3,0)rectangle (-2,0.5);
\node at (-2.5,0.25) {$\scriptstyle 2$};

\draw[black, very thick, fill=red!10] (-2,0)rectangle (-1,0.5);
\node at (-1.5,0.25) {$\scriptstyle 0^{z[3]}$};

\draw[black, very thick, fill=blue!10] (-1,0)rectangle (0,0.5);
\node at (-0.5,0.25) {$\scriptstyle 1$};

\draw[black, very thick, fill=red!10] (0,0)rectangle (1,0.5);
\node at (0.5,0.25) {$\scriptstyle 0^{z[4]}$};

\draw[black, very thick, fill=blue!10] (1,0)rectangle (2,0.5);
\node at (1.5,0.25) {$\scriptstyle 1$};

\draw[black, very thick, fill=red!10] (2,0)rectangle (3,0.5);
\node at (2.5,0.25) {$\scriptstyle 0^{z[5]}$};

\draw[black, very thick, fill=green!10] (3,0)rectangle (4,0.5);
\node at (3.5,0.25) {$\scriptstyle 2$};

\draw[black, very thick, fill=red!10] (4,0)rectangle (5,0.5);
\node at (4.5,0.25) {$\scriptstyle 0^{z[6]}$};

\draw[black, very thick, fill=blue!10] (5,0)rectangle (6,0.5);
\node at (5.5,0.25) {$\scriptstyle 1$};

\draw[black, very thick, fill=red!10] (6,0)rectangle (7,0.5);
\node at (6.5,0.25) {$\scriptstyle 0^{z[7]}$};

\draw[black, very thick, fill=blue!10] (7,0)rectangle (8,0.5);
\node at (7.5,0.25) {$\scriptstyle 1$};

\draw[black, very thick, fill=red!10] (8,0)rectangle (9,0.5);
\node at (8.5,0.25) {$\scriptstyle 0^{z[8]}$};

%Pattern String 
\draw[black, very thick, fill=green!10] (-2.5,-2)rectangle (-1.5,-2.5);
\node at (-2,-2.25) {$\scriptstyle 2$};

\draw[black, very thick, fill=blue!10] (-1.5,-2)rectangle (-0.5,-2.5);
\node at (-1,-2.25) {$\scriptstyle 1$};

\draw[black, very thick, fill=red!10] (-0.5,-2)rectangle (0.5,-2.5);
\node at (0,-2.25) {$\scriptstyle y$};

\draw[black, very thick, fill=blue!10] (0.5,-2)rectangle (1.5,-2.5);
\node at (1,-2.25) {$\scriptstyle 1$};

\draw[black, very thick, fill=green!10] (1.5,-2)rectangle (2.5,-2.5);
\node at (2,-2.25) {$\scriptstyle 2$};

%Lines

\draw[green, ultra thick] (-2,-2) -- (-2.5,0);
\draw[green, ultra thick] (2,-2) -- (3.5,0);

\draw[blue, ultra thick] (-1,-2) -- (-0.5,0);
\draw[blue, ultra thick] (1,-2) -- (1.5,0);

\draw[red, ultra thick] (0,-2) -- (0.5,0);

\draw[black, very thick] (-2.5,-2) rectangle (2.5,-2.5);
\end{tikzpicture}
    \caption{Illustration of Claim~\ref{pattern claim}. Let $m = \sigma = 3$ and $i=4$. Then $\pat(i,y) = 21y12$. Now observe that $\pat(i,y) \preceq x(z)$ if and only if $y \preceq x^{(0,i)} = 0^{z[i]}$.}
    \label{matchingbetween patstring and x}
\end{figure}
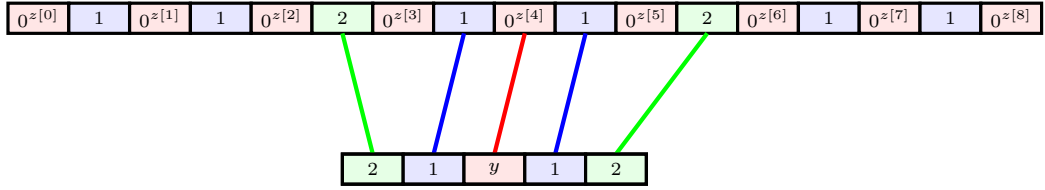

See Figure \ref{matchingbetween patstring and x} for illustration. Given $i$ and $y$, let $z^{(c)} = c^{i_{c-1}} \circ z^{(c-1)} \circ c^{m-1-i_{c-1}}$ for all $1 \leq c \leq \sigma-1$, and $z^{(0)} = y$. Note that $z^{(\sigma-1)}= \pat(i,y)$ . 
Set $j_c := \sum_{l=c}^{\sigma-2}i_l \cdot m^{l-c}$, so in particular we have $j_c = m \cdot j_{c-1} + i_c$.
Observe that $z^{(c)} \preceq x^{(c, j_c)}$ if and only if $z^{(c-1)} \preceq x^{(c-1, j_{c-1})}$, which follows immediately after matching all $c$'s in $z^{(c)}$ and $x^{(c,j_c)}$. Therefore, $\pat(i,y) = z^{(\sigma-1)} \preceq x^{(\sigma-1,0)} = x(z)$ holds if and only if $z^{(c)} \preceq x^{(c, j_c)}$ for any $c \leq \sigma-2$. Substituting $c = 0$ we obtain that $\pat(i,y) \preceq x(z)$ holds if and only if $y = z^{(0)} \preceq x^{(0, j_0)} = x^{(0,i)} = 0^{z[i]}$.
\end{proof}

\begin{theorem}
\label{One sided deterministic communication}
 Any deterministic subsequence sketch has size $\Omega(L^{\s -1} \log L)$ in the worst case.
 Restricted to strings of length $\Theta(L^{\s-1})$, the sketch size is $\Omega(L^{\s-1})$.
\end{theorem}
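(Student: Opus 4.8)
The plan is to convert Claim~\ref{pattern claim} into a counting lower bound: I would show that distinct choices of the vector $z$ are forced to produce distinct sketches, and that there are $2^{\Omega(L^{\s-1}\log L)}$ admissible choices of $z$.

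First I would fix $m := \lfloor L/\s \rfloor$ and $k := m^{\s-1}$, and prove that the map $z \mapsto \sk_L(x(z))$ is injective over all $z \in \{0,\dots,m-1\}^k$. Suppose $z \ne z'$ and pick an index $i$ with, say, $z[i] < z'[i]$; set $y := 0^{z[i]+1}$. Then $y \preceq 0^{z'[i]}$ but $y \npreceq 0^{z[i]}$, so Claim~\ref{pattern claim} gives $\pat(i,y) \preceq x(z')$ and $\pat(i,y) \npreceq x(z)$. The key check is that $\pat(i,y)$ is a legal query for Bob, i.e.\ has length at most $L$: its length is $|y| + (\s-1)(m-1) \le m + (\s-1)(m-1) \le \s m \le L$. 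Since Bob is deterministic and always answers correctly, he must answer differently on the inputs $(\sk_L(x(z)),\pat(i,y))$ and $(\sk_L(x(z')),\pat(i,y))$, which is impossible unless $\sk_L(x(z)) \ne \sk_L(x(z'))$. Hence the strings $\{x(z)\}_z$ receive at least $m^k$ distinct sketches, so the longest of them has bit length at least $\log_2(m^k) - 1 = m^{\s-1}\log_2 m - \Oh(1) = \Omega(L^{\s-1}\log L)$, using that $\s$ is constant and $m = \Theta(L)$. This gives the first statement; note that these strings have length $\Oh(L^{\s})$, consistent with the paper's earlier remark.

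For the second statement I would rerun the same argument but restrict to binary vectors $z \in \{0,1\}^k$. A short induction on the recursive definition of $x(z)$ — there are $m-1$ copies of $c$ inserted at recursion level $c$, contributing $m^{\s-1}-1$ separator symbols in total, while each leaf $x^{(0,i)} = 0^{z[i]}$ now has length at most $1$ — shows $|x(z)| \le 2m^{\s-1}-1 = \Theta(L^{\s-1})$. The queries $\pat(i,0^{z[i]+1})$ now have length at most $2 + (\s-1)(m-1) \le \s m \le L$, so the injectivity argument above goes through verbatim on $\{0,1\}^k$, yielding $2^k$ distinct sketches among strings of length $\Theta(L^{\s-1})$, hence a sketch of bit length at least $k - \Oh(1) = \Omega(L^{\s-1})$.

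Since Claim~\ref{pattern claim} already performs the combinatorial heavy lifting, I do not expect a genuine obstacle here. The only two points needing care are (i) choosing $m = \Theta(L)$ so that every pattern string $\pat(i,y)$ used in the distinguishing step stays within length $L$, and (ii) for the second statement, the observation that replacing the entry alphabet $\{0,\dots,m-1\}$ of $z$ by $\{0,1\}$ collapses the host string to length $\Theta(L^{\s-1})$ while still leaving $2^{\Theta(L^{\s-1})}$ pairwise distinguishable instances.
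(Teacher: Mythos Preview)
Your proposal is correct and follows essentially the same approach as the paper: both argue that $z\mapsto \sk_L(x(z))$ must be injective via Claim~\ref{pattern claim}, then count the admissible $z$'s, and both obtain the restricted-length statement by shrinking the entry alphabet of $z$ to $\{0,1\}$. Your write-up is in fact slightly more careful than the paper's in checking that $|\pat(i,y)|\le L$ and in bounding $|x(z)|$ for binary $z$.
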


\begin{proof}
Let $m := L/|\Sigma|$.
Let $z \in \{0,\ldots,m-1\}^k$ with $k = m^{|\Sigma|-1}$ and let $x = x(z)$ as above. Alice is given $x,L$ as input. Notice that there are $m^k$ distinct inputs for Alice. Assume for contradiction that the sketch size is less that $k \cdot \log m$ for every $x$. Then the total number of distinct possible sketches is strictly less than $m^k$. Therefore, at least two strings, say $x(z_1)$ and $x(z_2)$, have the same encoding, for some $z_1,z_2 \in \{0,\ldots,m-1\}^k$ with $z_1 \neq z_2$. Let $i$ be such that $z_1[i] \neq z_2[i]$, and without loss of generality $z_1[i] < z_2[i]$. Now set Bob's input to $y = \pat(i,z_2[i])$, which is a valid subsequence of $x(z_2)$, but not of $x(z_1)$. However, since the encoding for both $x(z_2)$ and $x(z_1)$ is the same, Bob's output will be incorrect for at least one of the strings.
Finally, note that $\lvert y \rvert \leq m \sigma = L$. Hence, we obtain a sketch size lower bound of $\Omega(k \log m) = \Omega(L^{\s-1} \log L)$.

If we instead choose $z$ from $\{0,1\}^k$, then the constructed string $x(z)$ has length $\Oh(k) = \Oh(L^{\s-1})$, and the same argument as above yields a sketch lower bound of $\Omega(L^{\s-1})$.
\end{proof}

%Finally, we consider randomized sketching procedures, where Bob is allowed to err with probability $1/3$.

We now discuss the complexity of randomized subsequence sketching where Bob is allowed to err with probability $1/3$. To this end, we will reduce from the \emph{Index} problem.

\begin{definition}
 In the \emph{Index} problem, Alice is given an $n$-bit string $z \in \left\{0,1\right\}^n$ and sends a message to Bob. Bob is given Alices's message and an integer $i \in [n]$ and outputs $z[i]$.
\end{definition}

Intuitively, since the communication is one-sided, Alice cannot infer $i$ and therefore has to send the whole string $z$. This intuition also holds for randomized protocols, as follows.

\begin{fact}[\cite{kremer99index}]
 \label{hardness for index}
 The randomized one-way communication complexity of {Index} is $\Omega(n)$.
\end{fact}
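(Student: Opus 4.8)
The final statement is Fact~\ref{hardness for index}, the $\Omega(n)$ lower bound on the randomized one-way communication complexity of the Index problem, attributed to Kremer, Nisan, and Ron~\cite{kremer99index}. Since it is cited as a known fact rather than reproved, the natural task is to sketch the standard proof.

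\begin{proof}[Proof sketch]
The plan is to go through the classical information-theoretic argument. Fix a randomized one-way protocol $\Pi$ in which Alice sends a message $M$ of expected length $s$ and Bob, on input $i \in [n]$ and message $M$, outputs a bit that equals $z[i]$ with probability at least $2/3$ over the protocol's randomness, for every fixed $z$ and $i$. By Yao's minimax principle it suffices to exhibit an input distribution on which every \emph{deterministic} one-way protocol with error at most $1/3$ must send $\Omega(n)$ bits; take $z$ uniform in $\{0,1\}^n$ and $i$ uniform in $[n]$, independently.

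First I would set up the entropy bookkeeping: since the protocol succeeds with probability $\ge 2/3$ on each coordinate $i$, Fano's inequality gives $H(z[i] \mid M, i) \le H(1/3) =: h < 1$ for each fixed $i$, hence $H(z[i] \mid M, i) \le h$ on average over $i$ as well. Next, because the coordinates $z[1],\dots,z[n]$ are independent and $M = M(z)$ is a function of $z$ alone (one-way communication, so $M$ does not depend on $i$), the chain rule and subadditivity of entropy give
\[
H(z \mid M) = \sum_{i=1}^n H(z[i] \mid M, z[1],\dots,z[i-1]) \le \sum_{i=1}^n H(z[i] \mid M),
\]
and each term $H(z[i]\mid M)$ is exactly the $i$-uniform average appearing above, so $H(z\mid M)\le h\cdot n$. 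On the other hand $H(z) = n$ since $z$ is uniform, so the message must carry information $I(z;M) = H(z) - H(z\mid M) \ge (1-h)\,n = \Omega(n)$, whence $\mathbb{E}[|M|] \ge H(M) \ge I(z;M) = \Omega(n)$. Converting back from the distributional bound to worst-case randomized complexity via Yao's principle completes the argument.

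The only real subtlety — the step I expect to need the most care — is making precise that $M$ depends only on $z$ (the essence of ``one-way'' communication), so that conditioning on $M$ does not break the independence of the coordinates $z[i]$ and so that $H(z[i]\mid M, i) = H(z[i]\mid M)$; once that independence is in hand, Fano's inequality and the chain rule do the rest. Since the statement is quoted as Fact~\ref{hardness for index} from~\cite{kremer99index}, one may of course simply invoke it; the above is the standard proof underlying that citation.
\end{proof}
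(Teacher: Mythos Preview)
The paper does not prove Fact~\ref{hardness for index} at all: it is stated as a cited result from~\cite{kremer99index} and then immediately used as a black box in the proof of Theorem~\ref{One sided randomized communication}. So there is no ``paper's own proof'' to compare against; your sketch simply supplies what the paper omits.

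Your argument is the standard information-theoretic one and is essentially correct. One small wrinkle worth tightening in a full write-up: after invoking Yao's principle you pass to a deterministic protocol with \emph{average} error at most $1/3$ over the uniform distribution on $(z,i)$, so you no longer have error $\le 1/3$ for each fixed $i$ separately. The clean fix is to apply Fano's inequality to the random pair $(z[I],(M,I))$ with $I$ uniform and independent of $z$, which yields $H(z[I]\mid M,I)\le H(1/3)$; expanding the left side as $\tfrac{1}{n}\sum_i H(z[i]\mid M)$ then gives exactly the bound you use. With that adjustment the rest of your chain-rule and mutual-information bookkeeping goes through verbatim.
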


Claim \ref{pattern claim} shows that {subsequence sketching} allows us to infer the bits of an arbitrary string~$z$, and thus the hardness of Index carries over to subsequence sketching.
  
\begin{theorem}
\label{One sided randomized communication}
 In a \emph{randomized} subsequence sketch, Bob is allowed to err with probability $1/3$.
 Any randomized subsequence sketch has size $\Omega(L^{\s -1})$ in the worst case.
 This holds even restricted to strings of length $\Theta(L^{\s-1})$.
\end{theorem}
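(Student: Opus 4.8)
The plan is to reduce from the \emph{Index} problem using Fact~\ref{hardness for index} and the gadget construction already in hand. Given an instance of Index with an $n'$-bit string $z' \in \{0,1\}^{n'}$, Alice will embed $z'$ into a vector $z \in \{0,1\}^k$ with $k = m^{\s-1}$, where $m := L/\s$, by padding with zeros if $n' < k$; this is WLOG since we only care about the regime $n' = \Theta(k) = \Theta(L^{\s-1})$. Alice then forms the string $x = x(z)$ and runs her side of the hypothetical randomized subsequence sketch on $x$ (and $L$), sending the resulting sketch $\sk_L(x)$ to Bob. On receiving an index $i \in [n']$, Bob sets $y = \pat(i,0^1) = \pat(i,0)$ (a single $0$ in the middle slot), which has length $|y| \le m\sigma = L$, and runs his side of the subsequence sketch to decide whether $y \preceq x$. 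By Claim~\ref{pattern claim}, $\pat(i,0) \preceq x(z)$ if and only if $0 \preceq 0^{z[i]}$, i.e., if and only if $z[i] \ge 1$, which for $z[i] \in \{0,1\}$ means exactly $z[i] = 1 = z'[i]$. Hence Bob recovers $z'[i]$ with probability at least $2/3$, solving Index.

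The key steps, in order: (1) fix $m := L/\s$, $k := m^{\s-1}$, and observe $k = \Theta(L^{\s-1})$ for constant $\s$; (2) given the Index string $z' \in \{0,1\}^{n'}$ with $n' \le k$, pad to $z \in \{0,1\}^k$ and note $|x(z)| = \Oh(k) = \Oh(L^{\s-1})$ by the earlier length bound (using the $\{0,1\}$ alphabet for the $z[i]$'s, exactly as at the end of the proof of Theorem~\ref{One sided deterministic communication}); (3) have Alice compute and transmit the randomized sketch of $x(z)$, and Bob query with $y = \pat(i,0)$; (4) invoke Claim~\ref{pattern claim} to translate the subsequence test into reading off $z[i] = z'[i]$, so any correct-with-probability-$2/3$ subsequence sketch yields a correct-with-probability-$2/3$ one-way protocol for Index on $n'$ bits; (5) apply Fact~\ref{hardness for index} to conclude the sketch has size $\Omega(n')$; choosing $n' = \Theta(k) = \Theta(L^{\s-1})$ gives the claimed $\Omega(L^{\s-1})$ bound, and this already holds for strings $x$ of length $\Theta(L^{\s-1})$.

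The only real subtlety is bookkeeping the success probability and the input size: Index lower bounds are typically stated for a fixed error constant (here $1/3$), which matches the allowed error of the randomized subsequence sketch, so no probability amplification is needed; and one must make sure the padding of $z'$ to length $k$ does not change which index Bob needs — it does not, since we simply keep the first $n'$ coordinates meaningful and leave Bob querying $i \in [n']$. A second minor point is that Bob's pattern $y = \pat(i,0)$ has length $|y| = 1 + \sum_{j}(i_{\sigma-1-j}) + \sum_j (m-1-i_{j-1}) \le 1 + (\sigma-1)(m-1) < m\sigma = L$, so the input is a legal subsequence-sketching query; I would just remark this rather than expand the arithmetic. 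There is essentially no hard obstacle here — the heavy lifting was done in Claim~\ref{pattern claim} — the proof is a clean reduction, so the main care is in stating the reduction precisely and citing Fact~\ref{hardness for index} correctly.
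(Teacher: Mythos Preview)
Your proposal is correct and is essentially the same argument as the paper's: reduce from Index by setting $k=m^{\s-1}$ with $m=L/\s$, have Alice build $x(z)$ for $z\in\{0,1\}^k$, have Bob query $\pat(i,0)$, and invoke Claim~\ref{pattern claim} together with Fact~\ref{hardness for index}. The paper simply takes the Index instance to have exactly $k$ bits rather than padding, and omits the length-of-$y$ check, but otherwise the proofs coincide.
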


\begin{proof}
We reduce the {Index} problem to {subsequence sketching}. Let $z \in \left\{0,1\right\}^k$ be the input to Alice in the {Index} problem, where $k = m^{\s-1}$. As above, we construct the corresponding input $x(z)$ to Alice in {subsequence sketching}. Observe that $\lvert x(z) \rvert = \mathcal{O}(m^{\s-1})$. For any input $i$ to Bob in the {Index} problem, we construct the corresponding input $\pat(i,0)$ for Bob in {subsequence sketching}. We have $\pat(i,0) \preceq x(z)$ if and only if $z[i] = 1$ (by Claim \ref{pattern claim}). This yields a lower bound of $\Omega(k) = \Omega(m^{\s-1}) = \Omega(L^{\s-1})$ on the sketch size (by Fact \ref{hardness for index}).
\end{proof} 

\section{Weighted LCS}

\begin{definition}
\label{WLCS}
In the WLCS problem we are given strings $x,y$ of lengths $n,m$ over alphabet $\Sigma$ and given a function $W \colon \Sigma \rightarrow \mathbb{N}$. A weighted longest common subsequence (WLCS) of $x$ and $y$ is any string $z$ with $z \preceq x$ and $z \preceq y$ maximizing $W(z) = \sum_{i=1}^{\lvert z \rvert } W(z[i])$. The task is to compute this maximum weight, which we abbreviate as $\WLCS(x,y)$.
\end{definition}
In the remainder of this section we will design an algorithm for computing $\WLCS(x,y)$ in time $\mathcal{O}(\min\{nm,n+m^{\s}\})$. This yields the upper bound of Theorem~\ref{thm:main2}. Note that here we focus on computing the maximum weight $\WLCS(x,y)$; standard methods can be applied to reconstruct a subsequence attaining this value.
We prove a matching conditional lower bound of $\min\{nm, n+m^{\s}\}^{1-o(1)}$ in the next section.

Let $x,y,W$ be given.
The standard dynamic programming algorithm for determining $\LCS(x,y)$ in time $\mathcal{O}(nm)$ trivially generalizes to $\WLCS(x,y)$ as well. Alternatively, we can first compress $x$ to $x' := C_m(x)$ in time $\mathcal{O}(n)$ and then compute the $\WLCS(x',y)$, which is equal to $\WLCS(x,y)$ since all subsequences of length at most $m$ of $x$ are also subsequences of $C_m(x)$. We show below in Theorem~\ref{WLCS of compressed and uncompressed string} how to compute WLCS of a run-length encoded string $x'$ with $r$ runs and a string $y$ of length $m$ in time $\mathcal{O}(r m)$. Since $x' = C_m(x)$ consists of $O(m^{\s-1})$ runs and the length of $y$ is $m$, we can compute $\WLCS(x,y) = \WLCS(C_m(x),y)$ in time $\mathcal{O}(m^{\s})$. In total, we obtain time $\mathcal{O}(\min\{nm,n+m^{\s}\})$.

It remains to solve WLCS on a run-length encoded string $x$ with $r$ runs and a string $y$ of length $m$ in time $\Oh(rm)$.
For (unweighted) LCS a dynamic programming algorithm with this running time was presented by Liu et al.~\cite{liu2008finding}. We first give a brief intuitive explanation as to why their algorithm does not generalize to WLCS. Let $x = c_1^{\ell_1}c_2^{\ell_2} \ldots c_r^{\ell_r}$ be the run-length encoded string, where $c_i \in \Sigma$, and let $L_i = \sum_{j=1}^{i} \ell_j$. Let $D(i,j) := \WLCS(x[{1 \ldots L_i}],y[{1 \ldots j}])$. Liu et al.'s algorithm relies on a recurrence for $D(i,j)$ in terms of $D(i,j-1)$. Consider an input like $x = ba_1a_2 \cdots a_kb$ and $y=a_1a_2 \cdots a_kbb$ with $W(b) > \sum_{\ell \in [k]} W(a_{\ell})$. Note that $D(k+2,k+1) = \sum_{\ell \in [k]}W(a_{\ell}) + W(b)$, but $D(k+2,k+2) = 2W(b)$. Thus $D(k+2,k+2) = D(k+2,k+1) - \sum_{\ell \in [k]} W(a_{\ell}) + W(b)$.  Therefore, in the weighted setting $D(i,j)$ and $D(i,j-1)$ can differ by complicated terms that seem hard to figure out locally. 
Our algorithm that we develop below instead relies on a recurrence for $D(i,j)$ in terms of $D(i-1,j')$.     

\begin{theorem}
\label{WLCS of compressed and uncompressed string}
Given a run-length encoded string $x$ consisting of $r$ runs, a string $y$ of length~$m$, and a weight function $W \colon \Sigma \rightarrow \mathbb{N}$ we can determine $\WLCS(x,y)$ in time $\mathcal{O}(rm)$.
\end{theorem}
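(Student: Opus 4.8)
The plan is to set up a dynamic program over the runs of $x$ rather than over individual characters of $x$. Write $x = c_1^{\ell_1} c_2^{\ell_2} \cdots c_r^{\ell_r}$ with $c_i \in \Sigma$ and $\ell_i \ge 1$, and let $L_i = \sum_{j \le i} \ell_j$ so that $x[1 \ldots L_i]$ is the prefix of $x$ consisting of its first $i$ runs. As suggested by the discussion preceding the statement, define $D(i,j) := \WLCS(x[1 \ldots L_i], y[1 \ldots j])$; we want $D(r,m)$. The key is a recurrence expressing $D(i,\cdot)$ in terms of $D(i-1,\cdot)$, i.e.\ going one whole run at a time. Appending the $i$-th run $c_i^{\ell_i}$ to the $x$-side, an optimal common subsequence of $x[1 \ldots L_i]$ and $y[1 \ldots j]$ either uses no character matched inside this new run — contributing $D(i-1,j)$ — or it matches a maximal block of $c_i$'s against some occurrences of $c_i$ in a suffix $y[j'+1 \ldots j]$ of the current $y$-prefix, having used $x[1 \ldots L_{i-1}]$ and $y[1 \ldots j']$ for the rest. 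If $y[j'+1 \ldots j]$ contains $t$ occurrences of $c_i$, the best we can extract from the run is $\min\{t,\ell_i\} \cdot W(c_i)$. Hence
\begin{align*}
 D(i,j) = \max\Bigl\{\, D(i-1,j),\ \max_{0 \le j' \le j}\ \bigl( D(i-1,j') + \min\{\,\#_{c_i}(y[j'{+}1\ldots j]),\ \ell_i\,\}\cdot W(c_i) \bigr) \Bigr\},
\end{align*}
where $\#_c(\cdot)$ counts occurrences of $c$. Correctness of this recurrence is the combinatorial heart of the argument and should be proved by the usual exchange argument: take an optimal alignment, look at the last run of $x$, and observe that the characters of $y$ it is matched against form a contiguous block's worth of $c_i$-occurrences which can be slid to be the topmost such occurrences without decreasing the weight, so that everything before splits off as a subproblem $D(i-1,j')$.

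Computing the recurrence naively costs $\Oh(m)$ per pair $(i,j)$, hence $\Oh(r m^2)$ overall, which is too slow. To bring it down to $\Oh(rm)$ amortized $\Oh(1)$ per $(i,j)$, I would process row $i$ with $j$ increasing and maintain, as $j$ grows, the value $f(j') := D(i-1,j') + \min\{\#_{c_i}(y[j'{+}1\ldots j]), \ell_i\} \cdot W(c_i)$ for all $j'$ implicitly. The point is that $\#_{c_i}(y[j'{+}1\ldots j])$ is a step function of $j'$ that changes by exactly $+1$ at each position where $y[\cdot] = c_i$, as $j$ advances by one. Split the range of $j'$ into two parts: those with $\#_{c_i}(y[j'{+}1 \ldots j]) \ge \ell_i$ (the ``saturated'' part, a prefix of $j'$-values) where $f(j') = D(i-1,j') + \ell_i W(c_i)$, and those with $\#_{c_i}(y[j'{+}1\ldots j]) < \ell_i$ (the ``unsaturated'' part, a suffix). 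For the saturated part we just need a running prefix-maximum of $D(i-1,\cdot)$. For the unsaturated part, maintain a monotone deque / sliding-window-maximum over the window of the last $\ell_i$ occurrences of $c_i$ in $y[1 \ldots j]$, keyed by $D(i-1,j') + (\text{number of }c_i\text{'s in }y[j'{+}1\ldots j])\cdot W(c_i)$; when $j$ advances past a new occurrence of $c_i$, one element moves from the unsaturated window into the saturated prefix, so both structures update in amortized $\Oh(1)$. Then $D(i,j)$ is the max of $D(i-1,j)$, the saturated prefix-max (plus $\ell_i W(c_i)$), and the deque maximum. Over all $j$ in row $i$ this is $\Oh(m)$, and over all $r$ rows $\Oh(rm)$; the final answer is $D(r,m)$.

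The main obstacle I anticipate is the bookkeeping in the $\Oh(1)$-amortized update: one has to be careful that the ``saturated'' set is indeed always a prefix in $j'$ and the ``unsaturated'' set a contiguous suffix (which holds because $\#_{c_i}(y[j'{+}1\ldots j])$ is nonincreasing in $j'$), that exactly the correct element migrates from the sliding window to the prefix-max when $j$ crosses a $c_i$, and that the sliding-window-maximum structure's keys are updated consistently — it is cleanest to store $D(i-1,j')$ together with the count of $c_i$'s strictly before position $j'{+}1$ and recompute the comparison key on the fly, so that incrementing $j$ only affects a global offset rather than every stored element. A secondary, purely routine, obstacle is handling the boundary cases $j' = 0$, empty runs cannot occur since $\ell_i \ge 1$, and verifying the base row $D(0,j) = 0$. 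None of this changes the asymptotics; the exchange-argument justification of the recurrence is the only conceptually delicate part.
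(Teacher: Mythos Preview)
Your proposal is correct and follows essentially the same approach as the paper: the same table $D(i,j)$, the same recurrence in terms of $D(i-1,\cdot)$, and the same sliding-window-maximum (monotone deque) over the key $h_i(j') = D(i-1,j') - W(c_i)\cdot|y[1\ldots j']|_{c_i}$ (your ``global offset'' trick) to achieve amortized $\mathcal{O}(1)$ per cell. The paper makes one small simplification you miss: since $D(i-1,\cdot)$ is nondecreasing, any saturated $j'$ is dominated by the rightmost saturated $j'$, which in turn is dominated by the leftmost unsaturated one, so the separate prefix-max for the saturated part is unnecessary and only the deque over the active (unsaturated) window is needed.
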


\begin{proof}
%For any arbitrary string $y$ on $\Sigma$ we define $\Next_y^{\sigma}(i)$ for an index $i$ as the smallest $j>i$ such that $y(j) = \sigma$. 
We write the run-length encoded string  $x$ as $c_1^{\ell_1}c_2^{\ell_2} \ldots c_r^{\ell_r}$ with $c_i \in \Sigma$ and $\ell_i \ge 1$. Let $L_i = \sum_{j = 1}^i \ell_j$. We will build a dynamic programming table $D$ where $D(i,j)$ stores the value  $\WLCS(x[{1 \ldots L_i}],y[{1 \ldots j}])$. In particular, $D(0,j) = D(i,0) = 0$ for all $i,j$. We will show how to compute this table in $\Oh(1)$ (amortized) time per entry in the following.
Since we can split $\WLCS(x[{1 \ldots L_i}],y[{1 \ldots j}]) = \max_{0 \le k \le j} \WLCS(x[{1 \ldots L_{i-1}}], y[{1 \ldots k}]) + \WLCS(c_i^{\ell_i}, y[{k+1 \ldots j}])$, we obtain the recurrence
$D(i,j) = \max_{0 \leq k \leq j} D(i-1,k) + W(c_i) \cdot \min\{\ell_i, |y[{k+1 \ldots j}]|_{c_i}\}$. Since $D(i,j)$ is monotonically non-decreasing in $i$ and $j$, we may rewrite the same recurrence as 
\begin{align*}
 D(i,j) &= \max_{0 \le k \leq j \;:\; |y[{k+1 \ldots j}]|_{c_i} \leq \ell_i} D(i-1,k) + W(c_i) \cdot \lvert y[{k+1 \ldots j}]\rvert_{c_i}.\\
 &= W(c_i) \cdot \lvert y[{1 \ldots j}] \rvert_{c_i} + \max_{0 \le k \leq j \;:\; |y[{k+1 \ldots j}]|_{c_i} \leq \ell_i} D(i-1,k)  - W(c_i) \cdot \lvert y[{1 \ldots k}] \rvert_{c_i}
\end{align*} 
 Let $b_{i,j}$ be the minimum value of $0 \le k \le j$ such that $\lvert y[{k+1 \ldots j}] \rvert_{c_i} \leq \ell_i$. Note that $b_{i,j}$ is well-defined, since for $k = j$ we always have $\lvert y[{k+1 \ldots j}] \rvert_{c_i} = 0 \leq \ell_i$, and note that $b_{i,j}$ is monotonically non-decreasing in $j$. 
 We define the \emph{active $k$-window} $K_{i,j}$ as the interval $\{b_{i,j}, b_{i,j} + 1, \ldots , j\}$. Note that $K_{i,j}$ is non-empty and both its left and right boundary are monotonic in $j$.
 Let $h_i(k) := D(i-1,k) - W(c_i) \cdot \lvert y[{1 \ldots k}] \rvert_{c_i}$ be the \emph{height} of $k$. 
 We define $\text{\emph{highest}}(K_{i,j})$ as $\max_{k \in K_{i,j}} h_i(k)$.
 With this notation, we can rewrite the above recurrence as 
 \[ D(i,j) = W(c_i) \cdot \lvert y[{1 \ldots j}] \rvert_{c_i} + \text{\emph{highest}}(K_{i,j}). \]
 We can precompute all values $\lvert y[{1 \ldots j}] \rvert _c$ in $\mathcal{O}(m)$ time. 
 Hence, in order to determine $D(i,j)$ in amortized time $\mathcal{O}(1)$ it remains to compute $\text{\emph{highest}}(K_{i,j})$ in amortized time $\Oh(1)$. To this end, we maintain the \emph{right to left maximum sequence} of the active window $K_{i,j}$. Specifically, we consider the sequence $\rtlm(K_{i,j}) = \langle k_s,k_{s-1}, \ldots ,k_1 \rangle$ where $k_1 = j$ and  for any $p > 1$, $k_p$ is is the largest number in $K_{i,j}$ with $k_{p} < k_{p-1}$ and $h_i(k_p) > h_i(k_{p-1})$. In particular, $k_s$ is the largest number in $K_{i,j}$ attaining $h_i(k_s) = \text{\emph{highest}}(K_{i,j})$. 
 Hence, from this sequence $\rtlm(K_{i,j})$ we can determine $\text{\emph{highest}}(K_{i,j})$ and thus $D(i,j)$ in time $\Oh(1)$. It remains to argue that we can maintain $\rtlm(K_{i,j})$ in amortized time $\Oh(1)$ per table entry.
 We sketch an algorithm to obtain $\rtlm(K_{i,j})$ from $\rtlm(K_{i,j-1})$.
\begin{algorithm}[H]
\begin{algorithmic}[1]
  \STATE Initialize $\rtlm(K_{i,j}) = \rtlm(K_{i,j-1})$
  \WHILE{the smallest (=leftmost) element $k$ of $\rtlm(K_{i,j})$ satisfies $|y[{k+1 \ldots j}]|_{c_i} > \ell_i$}
            \STATE Remove $k$ from $\rtlm(K_{i,j})$
  \ENDWHILE
  \WHILE{the largest (=rightmost) element $k$ of $\rtlm(K_{i,j})$ satisfies $h_i(k) \le h_i(j)$}
            \STATE Remove $k$ from $\rtlm(K_{i,j})$
  \ENDWHILE
  \STATE Append $j$ to $\rtlm(K_{i,j})$
\end{algorithmic}
\caption{Computing $\rtlm(K_{i,j})$ from $\rtlm(K_{i,j-1})$}
\end{algorithm}

It is easy to see correctness, since the first while loop removes right to left maxima that no longer lie in the active window, the second while loop removes right to left maxima that are dominated by the new element $j$, and the last line adds $j$.
Note that $\lvert y[{k+1 \ldots j}] \rvert_{c} = \lvert y[{1 \ldots j}] \rvert_{c} - \lvert y[{1 \ldots k}] \rvert_{c}$ can be computed in time $\Oh(1)$ from the precomputed values $\lvert y[{1 \ldots j}] \rvert_{c}$, and thus the while conditions can be checked in time $\Oh(1)$. A call of Algorithm 3 can necessitate multiple removal operations, but only one insertion. By charging removals to the insertion of the removed element, we see that Algorithm 3 runs in amortized time $\Oh(1)$. 
We therefore can compute each table entry $D(i,j)$ in amortized time $\Oh(1)$ and obtain total time $\Oh(rm)$.
Pseudocode for the complete algorithm is given below.
\end{proof}

\begin{algorithm}[H]
\begin{algorithmic}[1]
  \STATE \textbf{precompute} $\lvert y[{1 \ldots i}] \rvert_{c}$ for all $i \in [m]$ and $c \in \Sigma$.
  \STATE \textbf{set} $D(i,0) = D(0,j) = 0$ for any $0 \le i \le r$ and $0 \le j \le m$.
  \FOR{$i = 1, \ldots, r$}
    \STATE $\rtlm(K_{i,0}) \gets \langle 0 \rangle$.
    \FOR{$j=1, \ldots, m$} 
       \STATE Update $\rtlm(K_{i,j})$ as in Algorithm 3
       \STATE Let $k$ be the smallest (=leftmost) element of $\rtlm(K_{i,j})$
       \STATE Compute $highest(K_{i,j}) = h_i(k) = D(i-1,k) - W(c_i) \cdot \lvert y[{1 \ldots k}] \rvert_{c_i}$
       \STATE $D(i,j) \gets W(c_i) \cdot \lvert y[{1 \ldots j}] \rvert_{c_i} + highest(K_{i,j})$.
     \ENDFOR
  \ENDFOR
  \STATE \textbf{return} $D(r,m)$
\end{algorithmic}
\caption{Computing $\WLCS(x,y)$ in time $\mathcal{O}(r\cdot m)$}
\end{algorithm}

\section{Conditional lower bound for Weighted LCS}
\label{sec:conlowerbound}

In this section, we prove a conditional lower bound for Weighted LCS, based on the standard hypothesis SETH, which was introduced by Impagliazzo, Paturi, and Zane~\cite{ImpagliazzoPZ01} and asserts that satisfiability has no algorithms that are much faster than exhaustive search.

\medskip
\noindent
\textbf{%
%\paragraph{
Strong Exponential Time Hypothesis (SETH):} \emph{
%\begin{hypo}[Strong Exponential Time Hypothesis (\SETH)]
For any $\varepsilon > 0$ there is a $k \ge 3$ such that $k$-SAT on $n$ variables cannot be solved in time $\Oh((2-\varepsilon)^n)$.
} %\paragraph{}
%\end{hypo}

\medskip
Essentially all known SETH-based lower bounds for polynomial-time problems (e.g.~\cite{AbboudBVW15,backurs2016regular,Bringmann14,BringmannK15,BKsoda18}) use reductions via the \emph{Orthogonal Vectors problem} (OV):
Given sets $A$, $B\subseteq \{0,1\}^D$ of size $|A| = N$, $|B| = M$, determine whether there are $a \in A, b \in B$ that are orthogonal, i.e., $\sum_{i=1}^D a[i] \cdot b[i] = 0$, where the sum is over the integers.
Simple algorithms solve OV in time $\Oh(2^D (N+M))$ and $\Oh(N M D)$. The fastest known algorithm for $D=c(N)\log N$ runs in time $N^{2-1/\Oh(\log c(N))}$ (when $N=M$) \cite{AbboudWY15}, which is only slightly subquadratic for $D \gg \log N$. This has led to the following reasonable hypothesis.

\medskip
\noindent
\textbf{%
%\paragraph{
(Unbalanced) Orthogonal Vectors Hypothesis (OVH):} \emph{
%\begin{hypo}[Orthogonal Vectors Hypothesis (\OVH)]
For any $\gamma > 0$, 
OV restricted to $M = \Theta(N^\gamma)$ and $D=N^{o(1)}$ requires time $(NM)^{1 - o(1)}$.
} 
%\end{hypo}

\medskip
A well-known reduction by Williams~\cite{Williams05} shows that SETH implies OVH in case $\gamma = 1$. Moreover, an observation in~\cite{BKsoda18} shows that if OVH holds for some $\gamma > 0$ then it holds for all $\gamma > 0$.
Thus, OVH is a weaker assumption than SETH, and any OVH-based lower bound also implies a SETH-based lower bound. The conditional lower bound in this section does not only hold assuming SETH, but even assuming the weaker OVH.

We use the following construction from the OVH-based lower bound for LCS~\cite{AbboudBVW15,BringmannK15}. For binary alphabet, such a construction was given in~\cite{BringmannK15}.

\begin{theorem}
\label{LCS-binary}
Given $A,B \subseteq \left\{0,1\right\}^D$ of size $N$, in time $\Oh(DN)$ we can compute strings $x_A$ and $y_B$ on alphabet $\left\{0,1\right\}$ of length $\Theta(DN)$ as well as a number $\tau$ such that $\LCS(x_A,y_B) \geq \tau$ holds if and only if there is an orthogonal pair of vectors in $A$ and $B$. In this construction, $x_A$ and $y_B$ depend only on $A$ and $B$, respectively, and $|x_A|, |y_B|, \tau$ depend only on $N,D$.
\end{theorem}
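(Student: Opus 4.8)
The plan is to give the by-now standard reduction from Orthogonal Vectors, building $x_A$ and $y_B$ bottom-up from three layers of gadgets exactly as in~\cite{AbboudBVW15,BringmannK15}. The key point that makes the structural claims cheap is that every gadget we use will have a length that depends only on $N$ and $D$ and never on the actual bits of the vectors; granting this, the claims that $|x_A|,|y_B|,\tau$ depend only on $N,D$, that $x_A$ depends only on $A$ and $y_B$ only on $B$, that the lengths are $\Theta(DN)$, and that the construction runs in time $\Oh(DN)$, are all immediate. The only real content is correctness: $\LCS(x_A,y_B)\ge\tau$ if and only if some pair $a\in A$, $b\in B$ is orthogonal.

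\emph{Coordinate and vector gadgets.} For bits $\alpha,\beta\in\{0,1\}$ I first construct short binary strings $\mathrm{CG}_1(\alpha)$ and $\mathrm{CG}_2(\beta)$ of a fixed common length with $\LCS(\mathrm{CG}_1(\alpha),\mathrm{CG}_2(\beta))=\rho$ whenever $\alpha\beta=0$ and $\rho-1$ whenever $\alpha\beta=1$, for a constant $\rho$ independent of $\alpha,\beta$; over $\{0,1\}$ this can be done directly as in~\cite{BringmannK15}, or one builds it over a small constant alphabet and replaces each symbol by a distinct binary block that scales all LCS values by a common factor. For $a\in A$ I then form $\mathrm{VG}_1(a)$ by concatenating $\mathrm{CG}_1(a[1]),\dots,\mathrm{CG}_1(a[D])$ interleaved with long identical ``guard'' blocks, and likewise $\mathrm{VG}_2(b)$ from the $\mathrm{CG}_2(b[j])$, keeping each vector gadget of length $\Theta(D)$. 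Choosing the guards long enough forces any LCS-realizing alignment of $\mathrm{VG}_1(a)$ and $\mathrm{VG}_2(b)$ to align the $i$-th coordinate block of one side against the $i$-th of the other, and a routine exchange argument then yields $\LCS(\mathrm{VG}_1(a),\mathrm{VG}_2(b))=\Lambda+|\{i:a[i]b[i]=0\}|$ for a fixed $\Lambda$, which equals $\Lambda+D$ iff $a\perp b$ and is at most $\Lambda+D-1$ otherwise. I finally \emph{normalize} the vector gadgets by appending fixed padding so that (i) the ``honest'' value between a normalized $A$-gadget and a normalized $B$-gadget is some $V$ if the vectors are orthogonal and $V-1$ otherwise, and (ii) any alignment using only a prefix of one normalized gadget against only a suffix of the other is worth strictly less than $V-1$; property (ii) is what makes it safe to concatenate normalized vector gadgets.

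\emph{Taking the OR over all pairs.} I assemble $x_A$ and $y_B$ from the normalized vector gadgets of $A$ and of $B$ respectively, using one more layer of guard-separated concatenation -- the ``OR gadget'' of~\cite{AbboudBVW15,BringmannK15} -- engineered so that an optimal alignment of $x_A$ against $y_B$ must select a single $a\in A$ and a single $b\in B$, align their normalized gadgets fully, and clear all remaining material through the padding at a fixed background cost. Setting $\tau$ to be that background cost plus $V$ makes $\LCS(x_A,y_B)\ge\tau$ equivalent to the existence of an orthogonal pair. Since all guard and padding lengths are functions of $N,D$ alone and $x_A$ (resp.\ $y_B$) is assembled purely from $A$ (resp.\ $B$), the structural claims follow, with total length $\Theta(DN)$ and construction time $\Oh(DN)$.

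\emph{Main obstacle.} The delicate part is soundness: ruling out every ``cheating'' alignment -- coordinate blocks matched out of order, partially overlapping vector gadgets, or matched symbols spread across several $A$-gadgets and several $B$-gadgets -- and showing each is dominated by an honest alignment. This is precisely the alignment-gadget composition lemma of~\cite{BringmannK15}; the work lies in picking all guard and padding lengths large enough (but still $\Oh(D)$ per vector gadget, hence $\Oh(DN)$ overall) to make the exchange arguments go through while keeping $\tau$ and the string lengths independent of the vectors' bits. I would either invoke that lemma as a black box or re-derive the three composition steps above.
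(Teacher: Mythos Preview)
Your proposal is correct and follows exactly the alignment-gadget framework of~\cite{BringmannK15} that the paper cites; note that the paper itself does not prove Theorem~\ref{LCS-binary} but merely imports it as a black box from~\cite{AbboudBVW15,BringmannK15}, so your sketch is in fact more detailed than what appears in the paper.
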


We now prove a conditional lower bound for WLCS, i.e., the lower bound of Theorem~\ref{thm:main2}.

\begin{theorem} \label{thm:condlowerbound}
Given strings $x,y$ of lengths $n,m$ with $n \ge m$ over alphabet $\Sigma$, computing $\WLCS(x,y)$ requires time $\min\{nm, n + m^\s\}^{1-o(1)}$, assuming OVH. This holds even restricted to $n = m^{\alpha \pm o(1)}$ and $\lvert \Sigma \rvert = \sigma$ for any fixed constants $\alpha \in \mathbb{R}, \alpha \ge 1$ and $\sigma \in \mathbb{N}, \sigma \ge 2$.
\end{theorem}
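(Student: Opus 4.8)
The plan is to reduce from OV in two complementary regimes, matching the two terms of the bound, and then verify the claimed parameter constraints $n = m^{\alpha \pm o(1)}$ and $|\Sigma| = \sigma$ in each case. Recall $\min\{nm, n+m^\sigma\}$ is $nm$ when $n \le m^{\sigma-1}$ (roughly) and $n + m^\sigma \approx n$ when $n$ is much larger. So I would argue: for $\alpha \le \sigma - 1$ the target lower bound is $(nm)^{1-o(1)}$, and for $\alpha \ge \sigma - 1$ it is $n^{1-o(1)}$. Both are proven by instantiating OV with appropriate set sizes $N, M$ and dimension $D = N^{o(1)}$, invoking the by-now standard LCS hardness construction of Theorem~\ref{LCS-binary}, and ``inflating'' the binary construction to alphabet size $\sigma$ using the incompressible strings $x(z)$ (equivalently the recursive $x^{(c,i)}$ gadgets) from Section~3, with weights chosen to force the WLCS to pick out exactly the embedded binary LCS instance.

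\textbf{Regime $1 \le \alpha \le \sigma - 1$ (the $(nm)^{1-o(1)}$ bound).} Here I would start from unbalanced OV with $|A| = N$, $|B| = M = \Theta(N^\gamma)$ for a $\gamma$ to be fixed, and $D = N^{o(1)}$. Apply Theorem~\ref{LCS-binary} to get binary strings $x_A, y_B$ of lengths $\Theta(DN)$ and $\Theta(DM)$ respectively (the theorem as stated uses $|A|=|B|=N$, but the unbalanced version of the same construction is standard and also follows from the observation in~\cite{BKsoda18} that OVH for one ratio implies it for all; alternatively pad). Then embed $x_A$ into a $\sigma$-letter string as follows: take the $\sigma$-letter incompressible skeleton and substitute the binary string $x_A$ into the innermost $0^{z[\cdot]}$ slots, so that the resulting string $x$ over $\Sigma$ has length $\Theta(|x_A| \cdot \mathrm{poly})$, with a WLCS that, under a suitable weight function $W$ (giving the ``new'' letters $1,\dots,\sigma-1$ weight $0$ or a fixed bonus that is always collected, and letters $0,1$ of the binary instance their natural unit weights), equals $\mathrm{const} + \LCS(x_A,y_B)$ on input $(x, y)$ where $y$ is the analogous inflation of $y_B$. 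The point of using $\sigma - 1$ nested levels of the skeleton is that we get to multiply the ``$y$-length'' by a factor $m_0^{\sigma - 2}$ relative to $x$ without changing $x$'s length much — this is precisely how we move the effective ratio $n/m$ down to any $\alpha \in [1, \sigma-1]$ by choosing $\gamma$. A WLCS algorithm running in time $(nm)^{1-\varepsilon}$ would then solve OV in time $(NM)^{1-\varepsilon'}$, contradicting OVH.

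\textbf{Regime $\alpha \ge \sigma - 1$ (the $n^{1-o(1)}$ bound).} Here $m$ is comparatively tiny, and the target is $n^{1-o(1)} = (NM)^{1-o(1)}$ with $n \approx $ something like $m^\alpha$. The natural move is to let $y$ (the short string) encode the ``index'' side and $x$ encode all the vectors, exactly in the spirit of the subsequence-sketching lower bound (Theorem~\ref{One sided deterministic communication}): use the construction $x(z)$ with $z$ encoding the OV instance $A$ spread over $\approx m^{\sigma-1}$ slots, and use $\pat(i,\cdot)$-type patterns inside $y$ to probe; choose weights so that $\WLCS(x,y)$ detects an orthogonal pair. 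Since $|x(z)| = \Theta(m^\sigma)$ and we need $|x| \approx m^\alpha$ with $\alpha \ge \sigma-1$, we set the number of OV vectors to make $n = |x| = \Theta(NM)$ land at $m^{\alpha \pm o(1)}$; as $\alpha$ ranges over $[\sigma-1,\infty)$ this just rescales $N$ relative to $m$. A subquadratic-in-$n$ WLCS algorithm again refutes OVH.

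\textbf{Main obstacle.} The delicate part is the weight gadgetry: showing that, with the chosen $W$, a maximum-weight common subsequence of the inflated strings is \emph{forced} to align the skeleton letters $1, \dots, \sigma-1$ in the canonical ``matched'' way (so that no weight is lost and the surviving contribution is exactly the embedded binary instance), rather than cheating by e.g. matching many copies of a heavy skeleton letter at the expense of the binary payload. This is exactly the phenomenon flagged in the discussion before Theorem~\ref{WLCS of compressed and uncompressed string} (heavy letters distorting the optimum), so the weights must be chosen so that the ``structural'' letters are either free (weight $0$) or so cheap relative to an unavoidable fixed collectible bonus that deviating never pays. Making this airtight — and simultaneously keeping $D = N^{o(1)}$, controlling the lengths to hit $n = m^{\alpha \pm o(1)}$ for \emph{every} $\alpha$ and every $\sigma \ge 2$, and checking the boundary case $\alpha = \sigma - 1$ where both terms coincide — is where essentially all the work lies; the reduction from OV and the appeal to Theorem~\ref{LCS-binary} are then routine.
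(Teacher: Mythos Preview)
Your high-level plan (reduce from OV, embed the binary LCS gadget of Theorem~\ref{LCS-binary} into a $\sigma$-letter recursive skeleton, use weights to force a canonical alignment) matches the paper's. But the one place where you commit to a concrete idea --- the weight choice --- is exactly backwards, and this makes Regime~1 collapse.

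You propose making the structural letters $2,\dots,\sigma-1$ have weight~$0$ (or ``cheap''). If $W(c)=0$ for all $c\ge 2$, then $\WLCS(x,y)$ equals the ordinary LCS of the $\{0,1\}$-projections of $x$ and $y$. Two things go wrong. First, nothing now forces the slots to align: the $\{0,1\}$-projection of $x$ is just the concatenation of all the binary payloads, and $\LCS(\text{concat}_j\, x_A^{(j)}, y_B)$ is \emph{not} $\max_j \LCS(x_A^{(j)}, y_B)$ in general. Second, and fatally, you have reduced WLCS to an LCS instance of the same lengths, which Hirschberg solves in time $\tilde O(n+m^2)$; hence this route cannot yield an $(nm)^{1-o(1)}$ lower bound for any $\alpha>1$. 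Making the structural letters ``cheap but positive'' does not help: if they are cheap relative to the binary payload they still do not force alignment, and the same projection argument applies up to a negligible additive term.

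The paper does the opposite: it sets $W(0)=W(1)=1$ and $W(k)=\lambda M^{k-1}$ for $k\ge 2$, i.e.\ the structural letters are \emph{heavy}, with exponentially growing weights. The key lemma then shows that any optimal common subsequence must contain $k^{M-1}$ for every level $k$ (missing even one copy of $k$ costs more than the entire contribution of all lower levels combined, including the binary payload). This rigidly pins the alignment so that exactly one slot $j$ is free, giving $\WLCS(x,y)=\text{const}+\max_j \LCS(x_A^{(j)},y_B)$. Relatedly, the paper does not use an unbalanced binary LCS construction; it partitions $A$ into $N/M$ groups of size $M$, applies the balanced Theorem~\ref{LCS-binary} to each group against $B$, and lets the heavy-letter skeleton implement the OR over groups.

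One smaller point: your case split is slightly off. For $\alpha\ge\sigma$ the target $\min\{nm,n+m^\sigma\}\approx n$ is just the input size, so that case is trivial (the paper notes this). The nontrivial ``large $\alpha$'' case is $\sigma-1<\alpha<\sigma$, where the target is $(m^\sigma)^{1-o(1)}$, not $n^{1-o(1)}$; the paper handles it by a small modification of the same heavy-weight construction rather than a separate sketching-style argument.
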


\begin{proof}
Let $\Sigma = \left\{0,1, \ldots ,\sigma-1 \right\}$ and $\alpha= \alpha_I + \alpha_F$, where $\alpha_I = \lfloor \alpha \rfloor$ and $\alpha_F = \alpha - \alpha_I$ are the integral and fractional parts. Let $M \in \mathbb{N}$ and set $N = \min\{M^{\alpha_I} \cdot \lceil M^{\alpha_F} \rceil, M^{\sigma-1}\}$. Note that $M$ divides $N$. 
Consider any instance $A = \left\{a_0,a_1, \ldots, a_{N-1} \right\} \subseteq \left\{0,1\right\}^d$ and $B = \left\{b_0,b_1, \ldots, b_{M-1} \right\} \subseteq \left\{0,1\right\}^D$ of the Orthogonal Vectors problem. 
Partition $A$ into $A^0,A^1,\ldots,A^{N/M-1}$, where $|A^i| = M$.
Then by Theorem \ref{LCS-binary} we can construct strings $x_A^{(i)}$ and $y_B$ on alphabet $\left\{0,1\right\}$ of length $\Theta(D M)$ and $\tau \in \mathbb{N}$ in time $\mathcal{O}(D M)$  such that $A^i$ and $B$ contain an orthogonal pair of vectors if and only if $\LCS(x_A^{(i)},y_B) \geq \tau$. Note that $A$ and $B$ contain an orthogonal pair of vectors if and only if for some $0 \le i < \frac{N}{M}$, $A^i$ and $B$ contain an orthogonal pair of vectors. Hence, $A$ and $B$ contain an orthogonal pair if and only if $\max_{0 \leq i < \frac{N}{M}} \LCS(x_A^{(i)},y_B) \geq \tau$. 
In the following, we encode the latter inequality into an instance of $\WLCS$.
%Thus, we cannot determine $\max_{0 \leq i \leq \frac{\kappa}{m}-1} \LCS(x_A^{(i)},y_B)$ in time $\mathcal{O}(m^{(1+\kappa)(1-\mathcal{O}(1))}) = \mathcal{O}(m^{(1+\min(\alpha,\sigma-1))(1-\mathcal{O}(1))})$ unless $\mathit{UOVH}$ fails. 

For simplicity we only give the proof for integral $\alpha$ and $\alpha < \sigma$ (the remaining cases are omitted and can be found in the appendix). In this case, $N = M^\alpha$ and the running time lower bound that we will prove is $(nm)^{1-o(1)}$.

We set $\lambda$ to any value such that $\lambda > |y_B| / M$, and note that $\lambda \in \Theta(D)$ suffices.
Set $W(k) = \lambda \cdot M^{k-1}$ for $k \geq 2$, and $W(1)=W(0)=1$.  Let $\Sigma_k = \left\{0,1, \ldots ,k-1 \right\}$. 
We construct strings $x$ and $y$  as follows:
\begin{align*} 
 x &= x^{(\alpha,0)}\\
 x^{(k,i)} &= \Big( \bigcirc_{j=0}^{M-2} x^{(k-1,M \cdot i + j)} \circ k\Big) \circ x^{(k-1,M \cdot i + (M-1))} \quad  \text{for }2 \leq k\leq \alpha\\ 
 x^{(1,i)} &= x_A^{i} \quad \text{for }0 \leq i < N/M \\
%\end{align*} 
%
%\begin{align*}
 y &= y^{(\alpha)}\\
 y^{(k)} &= k^{M-1} \circ y^{(k-1)} \circ k^{M-1} \quad \text{for }2 \leq k\leq \alpha\\ 
 y^{(1)} &= y_B.
\end{align*} 

Observe that for all $k$,  $x^{(k,i)}$ and $y^{(k)}$ are defined on $\Sigma_k$. In particular, since $\alpha \leq \sigma-1$ we only use symbols from $\Sigma$. Let $\ell(k)$ denote the length of $x^{(k,i)}$ for any $i$. Observe that $\ell(k) = M \cdot \ell(k-1) + (M-1)$ and $\ell(1) \in \Theta(D M)$. Thus, $\ell(k) \in \Theta(D M^{k})$ and $n := \lvert x \rvert \in \Theta(D M^{\alpha})$. It is straightforward to see that $m := \lvert y \rvert = \Theta((k + D)M) = \Theta(D M)$, since $k \le \s = \Oh(1)$. 
Recall that for any string $z$, $W(z)$ is its total weight.
%Let $W(y^{(k)}) = \sum \limits_{i=1}^{\lvert y^{(k)} \rvert}W(y^{(k)}_{i})$. %We now state a technical claim,\\

\begin{claim}
\label{technical claim}
For any integer $2 \le k \le \alpha$, we have (1) $(M-1) \cdot \sum_{\ell=2}^{k} W(\ell) = \lambda (M^k-M)$  and (2) $W(y^{(k)}) < W(k+1) + \lambda \cdot (M^k-M)$.
%\begin{align}
%(M-1) \cdot \sum \limits_{\ell=2}^{k} W(\ell) &= \lambda \cdot (M^k-M) \text{for } 2 \leq k\leq \alpha \\
% W(y^{(k)}) &< W(k+1) + \lambda \cdot (M^k-M) \quad \text{for } 2 \leq k\leq \alpha
%\end{align}
\end{claim}

\begin{proof}
For (1), we calculate
$(M-1) \cdot \sum_{\ell=2}^{k} W(\ell) = (M-1) \cdot \sum_{\ell=1}^{k-1} \lambda M^{\ell}
=\lambda(M^k -M)$. 
For (2), by definition of $y^{(k)}$ and $\lambda$ we have
\[ W(y^{(k)}) < \lambda M + 2(M-1) \cdot \sum \limits_{\ell=2}^{k-1} W(l)
\stackrel{(1)}{=} \lambda M + 2 \lambda (M^k-M) = W(k+1) + \lambda (M^k-M). \qedhere \]
\end{proof}

We now can perform the core step of our correctness argument.

\begin{lemma}
\label{technical}
For any $2 \le k \le \alpha$ and $0 \le i < M^{k-1}$, we have (1) $\WLCS(x^{(k,i)},y^{(k)}) \geq \lambda (M^{k} -M)$, and (2) $\WLCS(x^{(k,i)},y^{(k)}) = (M-1) \cdot W(k) + \WLCS(x^{(k-1,j)},y^{(k-1)})$ for some $M\cdot i \leq j < M\cdot (i+1)$.
\end{lemma}

\begin{proof}
For (1), clearly $\bigcirc_{j=2}^{k}j^{M-1}$ is a common subsequence of $x^{(k,i)}$ and $y^{(k)}$. Together with Claim~\ref{technical claim}.(1), we obtain 
$\WLCS(x^{(k,i)},y^{(k)}) \ge \sum_{j=2}^{k}(M-1) \cdot W(j) = \lambda(M^k-M)$. 

For (2), we claim that $k^{M-1}$ is a subsequence of any WLCS of $x^{(k,i)}$ and $y^{(k)}$. Assuming otherwise, the WLCS can contain at most $M-2$ symbols $k$ and all of $y^{(k-1)}$. Therefore, 
\begin{align*}
 \WLCS(x^{(k,i)},y^{(k)}) &\leq (M-2) \cdot W(k) + W(y^{(k-1)})\\
 &< (M-2) \cdot W(k) + W(k) + \lambda(M^{k-1}-M) \quad \text{by Claim~\ref{technical claim}.(2)}\\
 &= (M-1) \cdot \lambda M^{k-1} + \lambda(M^{k-1}-M)
 = \lambda\cdot (M^{k}-M).
\end{align*} 
This contradicts $\WLCS(x^{(k,i)},y^{(k)}) \ge \lambda(M^k-M)$. It follows that $k^{M-1}$ is a subsequence of the WLCS of $x^{(k,i)}$ and $y^{(k)}$. Hence, $\WLCS(x^{(k,i)},y^{(k)}) = (M-1) \cdot W(k) + \WLCS(x^{(k-1,j)},y^{(k-1)})$ for some $j$ with $M\cdot i \leq j < M \cdot (i + 1)$. 
\end{proof}

Recursively applying the above lemma and substituting $x^{(1,j)}$ by $x_A^{j}$, we conclude that $\WLCS(x,y) = \lambda \cdot (M^{\alpha} -M) + \max_{0 \leq j < M^{\alpha-1}} \LCS(x_A^j,y_B)$. 
Using $M^\alpha = N$ and the construction of $x_A^j, y_B$, we obtain that $\WLCS(x,y) \geq \lambda (N - M) + \tau$ holds if and only if there is an orthogonal pair of vectors in $A$ and $B$. 
Since OVH asserts that solving the OV instance $(A,B)$ in the worst case requires time $(NM)^{1-o(1)}$, even for $D = N^{o(1)}$, we obtain that determining $\WLCS(x,y)$ requires time 
$(NM)^{1-o(1)} = (n m / D^2)^{1-o(1)} = (n m)^{1-o(1)}$. This completes the proof for all instances where $\alpha < \sigma$ is integral.
Note that if $\alpha \geq \sigma$, the claimed lower bound trivially holds as it matches the input size. Now we consider the two remaining cases, where $\sigma - 1 < \alpha < \sigma$ and $\alpha < \sigma-1$.

\medskip
\noindent
\emph{Case $\sigma - 1 < \alpha < \sigma$:} Then $N = M^{\alpha_I}=M^{\sigma-1}$.  We construct strings $x$ and $y$ as follows:
\begin{align*}
&x = x^{(\alpha_I,0)} \circ \alpha_I \circ 0^{D M^{\alpha}}\\
&y = y^{(\alpha_I)} \circ \alpha_I.
\end{align*}
Again, since $\alpha_I \leq \sigma-1$ the strings $x$ and $y$ only use symbols in $\Sigma$. We now have $n := \lvert x \rvert \in \Theta(D M^{\alpha})$ and $m := \lvert y \rvert \in \Theta(D M)$. Clearly, $\WLCS(x,y) \geq \WLCS(x^{(\alpha_I,0)},y^{(\alpha_I)}) + W(\alpha_I) \geq  W(\alpha_I) +  \lambda (M^{\alpha_I} -M)$. Similar to the proof for integral $\alpha$, we claim that $\alpha_I^{M}$ is a subsequence of the WLCS of $x$ and $y$. Assuming otherwise, the WLCS of $x$ and $y$ contains at most $M-1$ symbols $\alpha_I$ and all of $y^{(\alpha_I -1)}$. Therefore, 
\begin{align*}
\WLCS(x,y) &\leq (M-1) \cdot W(\alpha_I) + W(y^{(\alpha_I-1)})\\
&<(M-1) \cdot W(\alpha_I) + W(\alpha_I) + \lambda(M^{\alpha_I-1}-M) \quad \text{by Claim~\ref{technical claim}.(2)}\\  
&=W(\alpha_I) + \lambda \cdot (M^{\alpha_I}-M^{\alpha_I-1} + M^{\alpha_I-1} -M)
=W(\alpha_I) + \lambda(M^{\alpha_I} - M).
\end{align*}
This contradicts $\WLCS(x,y) \geq W(\alpha_I) +  \lambda (M^{\alpha_I} -M)$. Hence, $\alpha_I^{M}$ is a subsequence of the WLCS of $x$ and $y$, and $\WLCS(x,y) = W(\alpha_I) + \WLCS(x^{(\alpha_I,0)},y^{(\alpha_I)})$. It follows that $\WLCS(x,y) \geq \lambda M^{\alpha_I-1} + \lambda (M^{\alpha_I}-M) + \tau$ holds if and only if there exists an orthogonal pair of vectors in $A$ and $B$. 
OVH asserts that solving the OV instance $(A,B)$ in the worst case requires time $(NM)^{1-o(1)}$, even for $D = N^{o(1)}$. Using $N = \Theta(M^{\alpha_I})= \Theta(M^{\sigma-1})$, we obtain that determining $\WLCS(x,y)$ requires time 
$(NM)^{1-o(1)} = (M^{\sigma})^{1-o(1)} = ((m/D)^\sigma)^{1-o(1)} = (m^\s)^{1-o(1)}$.
This completes the proof in the case $\sigma - 1 < \alpha < \sigma$.

\medskip
\noindent
\emph{Case $\alpha < \sigma-1$:} 
In this case $\alpha_I \leq \sigma -2$ and $N = M^{\alpha_I} \cdot \lceil M^{\alpha_F} \rceil$. Let $f = \lceil M^{\alpha_F} \rceil$ as shorthand. We construct $x$ and $y$ as follows:
\begin{align*}
&x = \Big( \bigcirc_{j=0}^{f-2}x^{(\alpha_I,j)} \circ (\alpha_I+1) \Big) \circ x^{(\alpha_I,f-1)}\\
&y = (\alpha_I+1)^{f} \circ y^{(\alpha_I)} \circ (\alpha_I+1)^{f}
\end{align*}
Once again $x$ and $y$ consist of symbols in $\Sigma$, since $\alpha_I \leq \sigma-2$. Since $|x^{(\alpha_I,i)}| \in \Theta(D M^{\alpha_I})$, we have $n := \lvert x \rvert \in \Theta(D M^{\alpha_I + \alpha_F}) = \Theta(D M^{\alpha})$, and $m := \lvert y \rvert \in \Theta(D M)$. 
The same argument as before, now with $f$ instead of $M$ parts, shows that 
%Like earlier, it is straightforward to see that $(\bigcirc_{j=2}^{\alpha_I}j^{M-1}) \circ (\alpha_I+1)^{f-1}$ is a common subsequence of $x$ and $y$ and therefore, for some $j$, such that $0 \leq j f-1$ we have $\WLCS(x,y) = (f-1)W(\alpha_I+1) + \WLCS(x^{\alpha_I,j},y^{\alpha_I}) \geq (f-1)W(\alpha_I+1) + \lambda(M^{\alpha_I}-M)$. Similar to the earlier cases, we claim that $(\alpha_I+1)^{f-1}$ is a subsequence of the longest common subsequence of $x$ and $y$. Assume otherwise, then  weighted longest common subsequence of $x$ and $y$ contains at most $f-2$ $\alpha_I+1$s and all of $y^{(\alpha_I)}$. Therefore $\WLCS(x,y)$,
%
%\begin{align*}
%&\leq (f-2)W(\alpha_I+1) + W(y^{(\alpha_I)}\\
%&< (f-2)W(\alpha_I+1) + W(\alpha_I+1) + \lambda(M^{\alpha_I}-M)  \quad \text{ using (2)}\\ 
%&= (f-1)W(\alpha_I+1) + \lambda(M^{\alpha_I}-M)
%\end{align*}
%Therefore $(\alpha_I+1)^{f-1}$ is a valid subsequence of the longest common subsequence of $x$ and $y$ and 
$\WLCS(x,y) = 
(f-1)W(\alpha_I+1) + \WLCS(x^{(\alpha_I,j)}, y^{(\alpha_I)})$ holds for some $0 \le j < f$.
Plugging in $\WLCS(x^{(\alpha_I,j)}, y^{(\alpha_I)})$, we see that 
\[ \WLCS(x,y) = \lambda(f-1)M^{\alpha_I} + \lambda(M^{\alpha_I} -M) + \max_{0 \leq j \leq \frac{N}{M} -1} \LCS(x_A^{j},y_B).\] 
Hence, $\WLCS(x,y) \ge \lambda (f-1)M^{\alpha_I} + \lambda(M^{\alpha_I} -M) + \tau$ holds if and only if there is an orthogonal pair of vectors in $A$ and $B$. 
OVH asserts that solving the OV instance $(A,B)$ in the worst case requires time $(NM)^{1-o(1)}$, even for $D = N^{o(1)}$. Using $N = \Theta(M^{\alpha_I} \cdot f) = \Theta(M^{\alpha})$, we obtain that determining $\WLCS(x,y)$ requires time 
$(NM)^{1-o(1)} = (M^{\alpha + 1})^{1-o(1)} = (n m/D^2)^{1-o(1)} = (nm)^{1-o(1)}$.
This completes the proof of the last case $\alpha < \sigma-1$.

Finally, note that in all cases we constructed strings over alphabet size $\sigma$ of length $n = M^{\alpha\pm o(1)}$ and $m = M^{1 \pm o(1)}$, and thus $n = m^{\alpha \pm o(1)}$.
\end{proof}

%
%Note that if $\alpha \geq \sigma$, the claimed lower bound trivially holds as it matches the input size. 
%In the appendix we analyze the two remaining cases $\sigma - 1 < \alpha < \sigma$ and $\alpha < \sigma-1$, in which we prove lower bounds of $(m^\s)^{1-o(1)}$ and $(nm)^{1-o(1)}$, respectively. The proofs of these cases are very similar to the special case that we considered in this section.

%
%\subparagraph*{Acknowledgements.}
%
%I want to thank \dots

%%
%% Bibliography
%%

%% Either use bibtex (recommended), 

\end{document}